\documentclass[
  a4paper,
  onecolumn,
  titlepage,
  11pt,
  unpublished,
  amsmath,
  amssymb
]{quantumarticle}
\pdfoutput=1

\usepackage{amsthm}
\usepackage{booktabs}
\usepackage{graphicx}
\usepackage{algorithm}
\usepackage{algpseudocode}
\usepackage{placeins}
\usepackage[numbers,sort&compress]{natbib}
\usepackage{xurl}
\usepackage{tikz}
\usetikzlibrary{arrows.meta,positioning,fit,backgrounds,calc}
\usepackage{hyperref}

\AtBeginDocument{%
  \hypersetup{%
    colorlinks=true,
    linkcolor=[RGB]{200,90,20},
    citecolor=[RGB]{0,90,160},
    urlcolor=[RGB]{0,90,160}%
  }%
}

\newcommand{\code}[1]{\path{#1}}
\newcommand{\ket}[1]{\ensuremath{|#1\rangle}}
\newcommand{\bra}[1]{\ensuremath{\langle#1|}}

\newtheorem{lemma}{Lemma}
\newtheorem{proposition}{Proposition}
\newtheorem{corollary}{Corollary}
\newtheorem{remark}{Remark}

\begin{document}

\title{Pauli Decomposition by Character Theory: A Memory-Bounded
Algorithm for Qubits and Qudits}

\author{Mohammadreza Khellat}
\email{mkhellat@beavernets.com}
\orcid{0000-0002-8236-809X}
\affiliation{Beavernets Technologies}

\author{Mohammad Masoumi}
\orcid{0000-0002-6955-834X}
\affiliation{Beavernets Technologies}

\author{Saman Nasoori}
\affiliation{Beavernets Technologies}

\author{Soroush Nasoori}
\affiliation{Beavernets Technologies}

\begin{abstract}
Pauli decomposition of an $n$-qubit operator is expensive if each of
the $4^n$ Pauli strings is treated as an unstructured object. Labeling
those strings by the configuration space $(\mathbb{Z}_2)^n$ makes them
displacement operators $X^{\otimes x}Z^{\otimes z}$ on the
computational space: shifts by~$x$ and multiplications by characters
$Z^{\otimes z}$ in the computational basis. That labeling organizes
the operator into $2^n$ fibers of length $2^n$, one per shift~$x$,
each collecting the computational-basis amplitudes that hop by~$x$.
Expanding each fiber function in the character basis of
$(\mathbb{Z}_2)^n$ --- a length-$2^n$ Walsh--Hadamard transform ---
recovers the Pauli coefficients at total cost $O(n\cdot4^n)$. The same
fiber /
shift--character picture extends to $(\mathbb{Z}_p)^n$
(Heisenberg--Weyl qudits). \texttt{paulikit} is a memory-bounded
realization for qubits: for matrix-sparse inputs, peak memory need not
densify to a full $2^n\times2^n$ operator. Performance is checked on a
stated low-end GNU/Linux platform for a coupled-oscillator sweep
beyond a billion Pauli terms.
\end{abstract}

\maketitle

\section{Introduction}
\label{sec:intro}

Simulating an operator $T$ on a gate-model quantum computer --- a
Hamiltonian in the usual case --- presupposes its availability to
termwise circuit action in a Pauli basis the hardware can apply.
Pauli decomposition makes a simulated operator available to that
termwise action as
\begin{equation}
T=\sum_{x,z}d_{x,z}\,P(x,z),
\label{eq:intro-pauli-decomposition}
\end{equation}
with $P(x,z)$ ranging over the $n$-qubit Pauli strings (tensor
products of $\{I,X,Y,Z\}$) and $d_{x,z}\in\mathbb{C}$ the
corresponding coefficients. Trotterization~\citep{abrams_lloyd_1997}
consumes such a sum term by term in a product formula; phase
estimation~\citep{abrams_lloyd_1999,aspuru_guzik_2005} uses simulated
evolution built from the same expansion --- including molecular
ground-state calculations that run phase estimation on a chemically
mapped Hamiltonian~\citep{aspuru_guzik_2005}. The sparse-Hamiltonian
step inside HHL, when that operator is itself a Pauli
sum~\citep{harrow_hassidim_lloyd_2009}, and variational ans\"atze built
from Pauli observables, inherit the same coefficient cost. The subject
of the present work
is that cost --- the gap between a naive $4^n$-scale treatment and the
fast Walsh--Hadamard route --- together with the range of the fast
argument. The presentation is for qubits; the same character-theoretic
construction specializes to qudits
(Corollary~\ref{cor:qudit-decomposition}). In \texttt{paulikit},
non-Hermitian operators are already
a first-class tested input (\texttt{assume\_hermitian=False}), while a
qudit module is not yet shipped. A worked non-Hermitian physical
example beside the Hermitian oscillator family of
Section~\ref{sec:motivation}, and a deployed qudit module on
Corollary~\ref{cor:qudit-decomposition}, are left to follow-on work.

\subsection{Contributions}
\label{sec:contributions-skim}

The contributions may be summarized as follows.
(i)~Pauli decomposition is derived from the character theory of
$(\mathbb{Z}_2)^n$, reducing to $2^n$ independent Walsh--Hadamard
transforms at cost $O(n\cdot 4^n)$; the same argument yields the qudit
case via the DFT over $(\mathbb{Z}_p)^n$
(Corollary~\ref{cor:qudit-decomposition}).
(ii)~The character-sum / Walsh--Hadamard stage already produces a
complete, cost-equivalent decomposition in the bare Pauli basis;
passage to the genuine basis with $Y=iXZ$ is a phase convention rather
than an algebraic requirement (Remark~\ref{rem:why-genuine}).
(iii)~Relative to Georges et al.\ \citep{georges_et_al_2025}, the
asymptotic time class is unchanged. What is put first is peak memory
independent of densifying a matrix-sparse input to a full
$2^n\times 2^n$ operator, as opposed to an $O(1)$ additional working
set on a mandatory dense buffer (an already-dense ndarray input still
occupies its own $\dim\times\dim$ footprint). The claim is verified on
a fully-coupled oscillator sweep through $N=300$
($1{,}470{,}038{,}016$ surviving terms), with detailed
single-configuration checks at $N=150$.
(iv)~\texttt{paulikit} is released under GPL-3.0-or-later with
exhaustive per-term checks where claimed, and Work--Span together with
memory-bandwidth ceilings are derived and checked by direct
measurement on a stated low-end GNU/Linux platform
(Section~\ref{sec:results}).

Georges et al.~\citep{georges_et_al_2025} record the cost ladder from
an $O(2^{5n})$ naive trace and an $O(2^{3n})$ naive matmul down to
$O(n\cdot2^{2n})$. Their advance is in-place decomposition on the dense
matrix buffer ($O(1)$ \emph{additional} memory), not a better time
exponent. Among the $O(n\cdot2^{2n})$ peers
\citep{gunlycke_et_al_2020,hantzko_et_al_2024,gidney_2023,
hamaguchi_et_al_2024}, Hamaguchi, Hamada and Yoshioka's \emph{Quantum}
handbook \citep{hamaguchi_et_al_2024} (with open C++ package
\code{paulidecomp} \citep{hamaguchi_et_al_2024_paulidecomp}) gives a
careful early published account of an FWHT-based Pauli-vector routine,
as a subroutine of their Robustness-of-Magic algorithms. Gidney's
practical Stack Exchange construction \citep{gidney_2023}, Hantzko et
al.'s tensorized algorithm \citep{hantzko_et_al_2024}, and Georges et
al.'s dedicated FWHT treatment \citep{georges_et_al_2025} are equally
part of the $O(n\cdot2^{2n})$ lineage.
A separate group Georges et al.\ cite
\citep{romero_santos_suarez_2023,jones_2024,koska_et_al_2024,
lapworth_2022} is named without any complexity attached; reading each
individually shows why not --- none of the four is at $O(n\cdot2^{2n})$
at all.

PauliComposer \citep{romero_santos_suarez_2023} is not really about
decomposition: its target is \emph{building} the matrix of a given
Pauli string, in $O(2^n)$, by exploiting that each row has exactly one
nonzero entry; decomposition rides along as an application. Jones
\citep{jones_2024} takes each of the $4^n$ coefficients head-on, an
$O(2^n)$ Gray-code inner product per coefficient, $O(8^n)$ overall ---
fixed-memory and embarrassingly parallel, and asymptotically worse for
sharing no work between coefficients. Koska et al.\
\citep{koska_et_al_2024} remain $O(4^n)$ but arrange the work as a
recursion tree that prunes repeated sub-products --- a better
constant, not a better exponent. Lapworth~\citep{lapworth_2022} uses
an orthogonality relation for the elementwise (Hadamard) product of
Pauli matrices and its grand sum, aimed at sparsity patterns in
computational-fluid-dynamics matrices; it is not an $O(n\cdot2^{2n})$
transform method.

PennyLane's Pauli-decomposition path~\citep{bergholm_et_al_2022}
moved from a 2020 naive $4^n$-term trace loop for Hermitian operators
only (\path{decompose_hamiltonian}) to a 2023 Walsh--Hadamard rewrite
that drops the Hermitian restriction~\citep{pennylane_pr4395},
following Gidney's construction~\citep{gidney_2023}. The stated
full-matrix cost is $O(n\,4^n)$, the same overall class as the other
FWHT peers. A later sparse path~\citep{pennylane_pr9728} applies the
same dense XOR--Walsh--Hadamard identity as Georges
et al.~\citep{georges_et_al_2025}, restricted to nonempty
$\mathtt{row}\oplus\mathtt{col}$ groups.

Classiq's \texttt{matrix\_to\_hamiltonian}
\citep{classiq_matrix_to_hamiltonian}, absent from Georges et al.'s
own survey, is another production implementation of the same
$\{I,Z\}/\{X,iY\}$-Hadamard-matrix approach, correct and complete from
its introduction, inside a closed-source SDK with no paper or public
documentation to cite in its place. Bisecting PyPI release wheels (the
only way in, absent a public changelog) dates that introduction to
version 0.44.0, uploaded 2024-08-05.

Spencer, Bharti and Gorshkov \citep{spencer_bharti_gorshkov_2026}
also recover a Pauli expansion, but not from an explicit matrix in
hand. Their setting assumes only $k=\mathrm{poly}(n)$ nonzero Pauli
coefficients, and access is by classical sparse queries: each query
returns the nonzero entries of one chosen row or column. A randomized
classical algorithm then reconstructs the support with failure
probability at most $\delta\in(0,1)$, at cost polynomial in $n$ and
$\log(1/\delta)$. By contrast, the $O(n\cdot 4^n)$ matrix-input FWHT methods
\citep{georges_et_al_2025,gidney_2023,hantzko_et_al_2024,hamaguchi_et_al_2024}
make no Pauli-sparsity promise of that kind,
and neither does the coupled-oscillator family of
Section~\ref{sec:motivation}. Those oscillators can be matrix-sparse
--- comparatively few nonzero entries --- yet still Pauli-heavy:
$91{,}652{,}096$ surviving coefficients at $N=150$ ($n=14$) and
$1{,}470{,}038{,}016$ at $N=300$ ($n=16$) --- far from Spencer et
al.'s small-$k$ regime (e.g.\ $k\sim 2n$).

Among the matrix-input FWHT peers, the sharper distinction for
\texttt{paulikit} is memory. Georges et al.'s $O(1)$ \emph{additional}
memory sits on a mandatory dense $2^n\times2^n$ load. On the platform
of Section~\ref{sec:results}, the streaming implementation's measured
peak memory tracks the chunk working set rather than a densified
$2^n\times2^n$ operator, as checked on the oscillator sweep through
$N=300$ ($1{,}470{,}038{,}016$ terms; per-term checks and chunking
demos at $N=150$). Spencer et al.\ differ by access model and are
outside this matrix-input memory comparison.

\section{Character-theoretic derivation}
\label{sec:approach}

A Fourier transform diagonalizes \emph{translation by a group
element}. For a finite abelian group $G$, write
$\mathbb{C}^G\cong\mathrm{Fun}(G,\mathbb{C})$
for the vector space of
functions $f:G\to\mathbb{C}$, with $G$ acting on it by translation,
$(g\cdot f)(h)=f(g^{-1}h)$, and let
$\widehat{G}=\{\chi:G\to\mathbb{C}^\times\}$ be its group of
characters. Any operator $T$ commuting with the action satisfies
$T\chi=\lambda_\chi\chi$ for every $\chi\in\widehat{G}$ and some
$\lambda_\chi\in\mathbb{C}$. This means $T$ is diagonal in the
character basis, whatever it looks like in the basis
$\{\delta_g\}_{g\in G}$ it started in. Every transform in this paper
--- the discrete Fourier transform, the Walsh--Hadamard transform ---
is this same diagonalization by characters, applied to a different
group $G$.

Specializing to qubits, take the configuration group
$G=(\mathbb{Z}_2)^n$ and the computational space
$\mathbb{C}^G\cong\mathrm{Fun}(G,\mathbb{C})$ from above.
The characters of $G$ are the sign functions
\begin{equation}
\chi_z(v)=(-1)^{\langle v,z\rangle}, \qquad z\in G,
\label{eq:character}
\end{equation}
with $\langle v,z\rangle=\sum_i v_iz_i\bmod 2$. The characters
$\{\chi_z\}_{z\in G}$ form a basis of $\mathbb{C}^G$.
Each is implemented on the computational space by the multiplication
operator
$Z^{\otimes z}:=Z^{z_1}\otimes\cdots\otimes Z^{z_n}$
(with $Z^0:=I$), acting by
$\bigl(Z^{\otimes z}f\bigr)(v)=\chi_z(v)\,f(v)$.
Thus an operator $T$ diagonal in the computational basis ---
equivalently, supported on the fiber $\Delta_0=\{(v,v):v\in G\}$ with
fiber function $f_0(v)=T_{v,v}$ --- expands uniquely as
\begin{equation}
T=\sum_{z\in G}c_z\,Z^{\otimes z},
\qquad
c_z=2^{-n}\sum_{v\in G}\chi_z(v)\,f_0(v),
\label{eq:diagonal-pauli-expansion}
\end{equation}
where the $c_z$ are exactly the character coefficients of $f_0$, at no
further cost. The real problem begins only where $T$ fails to commute
with translations: nonzero matrix elements off the main diagonal
require shift labels $x$ and the operators $X^{\otimes x}$, taken up
next.

For $x\in G$, the translation operators are
$X^{\otimes x}:=X^{x_1}\otimes\cdots\otimes X^{x_n}$, acting on
$\mathbb{C}^G$ by
$(X^{\otimes x}f)(v)=f(v\oplus x)$, identifying the computational
basis vector $\lvert v\rangle$ with $\delta_v$.
Nonzero matrix elements
$T_{v,v\oplus x}=\bra{v}T\ket{v\oplus x}$ for $x\neq 0$ live on the
fiber $\Delta_x=\{(v,v\oplus x):v\in G\}$, with fiber function
$f_x(v):=T_{v,v\oplus x}\in\mathbb{C}^G$ --- the $x=0$ case
$\Delta_0$, $f_0$ already treated above. The displacements
$X^{\otimes x}Z^{\otimes z}$ combine that shift with multiplication
by $\chi_z$, so Pauli labels are shift--character pairs $(x,z)\in G\times G$.
These labels do not multiply componentwise without a phase: since
$XZ=-ZX$ on one qubit,
\begin{equation}
X^{\otimes x}Z^{\otimes z}
=
(-1)^{\langle x,z\rangle}
Z^{\otimes z}X^{\otimes x}.
\label{eq:symplectic-commutation}
\end{equation}
The factor $(-1)^{\langle x,z\rangle}$ is the value of the symplectic
pairing $(x,z)\mapsto(-1)^{\langle x,z\rangle}$ of shift and character
labels, which encodes the Pauli group's noncommutativity. The bare
products $X^{\otimes x}Z^{\otimes z}$ are real matrices and span only
a real $4^n$-dimensional subspace of $\mathbb{C}^{2^n\times2^n}$.
Admitting $Y:=iXZ$ wherever $x$ and $z$ overlap yields the genuine
strings (as in Proposition~\ref{prop:qubit-decomposition} below)
\begin{equation}
\mathrm{Genuine}(x,z)
=
i^{|x\wedge z|}
X^{\otimes x}Z^{\otimes z},
\qquad
|x\wedge z|
:=
\sum_i x_iz_i\in\{0,\ldots,n\},
\label{eq:genuine-def}
\end{equation}
which do span the full matrix space. (Here $|x\wedge z|$ is an
unreduced integer, unlike the mod-$2$ pairing $\langle x,z\rangle$.)
The Pauli group is then the central extension of $G\times G$ by
$\{\pm1,\pm i\}$ determined by that symplectic pairing --- made
precise in Lemma~\ref{lem:central-extension} below. (The phase
section is not forced by the algebra alone; an unphased real basis is
equally complete. Remark~\ref{rem:why-genuine} returns to
why $Y$ is used here.)

Extracting $T$'s Pauli coefficients from these entries takes two
steps: partition the nonzero matrix elements over the fibers
$\Delta_x$, forming the fiber functions $f_x\in\mathbb{C}^G$ already
introduced, then expand each $f_x$ in the character basis
$\{\chi_z\}_{z\in G}$ and phase by $i^{|x\wedge z|}$. Explicitly,
\begin{equation}
T=\sum_{x,z\in G}d_{x,z}\,\mathrm{Genuine}(x,z),
\qquad
d_{x,z}
=
i^{|x\wedge z|}\,
2^{-n}\sum_{v\in G}\chi_z(v)\,f_x(v),
\label{eq:fiber-decomposition}
\end{equation}
with $\mathrm{Genuine}(x,z)$ as in Eq.~\eqref{eq:genuine-def}. Set the
partition and the phase aside, and each fiber's sum
$2^{-n}\sum_{v\in G}\chi_z(v)\,f_x(v)$ is the ordinary
fast Walsh--Hadamard transform of $f_x$ --- made precise, with its
recursive cost, in the proof of Proposition~\ref{prop:qubit-decomposition}
below. Recovering every Pauli coefficient of $T$ without visiting all
$4^n$ independently --- this paper's problem --- is that transform,
run once per fiber with support in $T$, carrying the symplectic phase of
the Pauli group it decomposes. Before the closed-form statement,
Lemma~\ref{lem:central-extension} makes precise the sense of
\emph{extension} used above; the same notion recurs for qudits as
well as qubits.

\begin{lemma}
\label{lem:central-extension}
Let $Q$ be a finite abelian group and $E$ a group with a central
subgroup~\mbox{$N\leq Z(E)$} (the center of~$E$). If
$\varphi:E\to Q$ is a surjective homomorphism with kernel exactly
$N$, then $N$ is normal in $E$ (it is central), $E/N$ is a group, and
$\varphi$ descends to an isomorphism $E/N\cong Q$: $E$ is a central
extension of $Q$ by $N$.
\end{lemma}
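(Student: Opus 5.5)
The plan is to check the three claims of the statement in order, each reduced to a standard group-theoretic fact. The argument does not use any earlier result of the paper; it needs only the definitions of centre, normality and kernel and the first isomorphism theorem.

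\emph{Normality of $N$.} For $n\in N$ and $g\in E$, centrality gives $gng^{-1}=ngg^{-1}=n\in N$. Hence $gNg^{-1}=N$ for every $g$, so $N$ is normal in $E$. As a cross-check, $N=\ker\varphi$ is normal for the independent reason that every kernel of a homomorphism is normal. It follows that the coset product $(gN)(hN)=ghN$ is well defined, so $E/N$ is a group with identity $N$ and inverses $(gN)^{-1}=g^{-1}N$.

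\emph{The induced map.} Define $\bar\varphi:E/N\to Q$ by $\bar\varphi(gN)=\varphi(g)$.
\begin{itemize}
\item \emph{Well defined:} if $gN=hN$ then $h^{-1}g\in N=\ker\varphi$, so $\varphi(g)=\varphi(h)$.
\item \emph{Homomorphism:} this is inherited from $\varphi$, since $\bar\varphi(gN\,hN)=\varphi(gh)=\varphi(g)\varphi(h)$.
\item \emph{Injective:} $\bar\varphi(gN)=1$ forces $g\in\ker\varphi=N$, i.e.\ $gN=N$.
\item \emph{Surjective:} this is the surjectivity of $\varphi$.
\end{itemize}
So $\bar\varphi$ is an isomorphism $E/N\cong Q$. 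The exact sequence $1\to N\to E\to Q\to 1$ with $N$ central is precisely the definition of a central extension of $Q$ by $N$.

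No step is a genuine obstacle; this is the first isomorphism theorem plus one line on centrality. The only point needing care is that the \emph{exact} kernel hypothesis is used twice: in well-definedness (which needs $N\subseteq\ker\varphi$) and in injectivity (which needs $\ker\varphi\subseteq N$). The proof should flag both directions.

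The abelianness of $Q$ is not used in the proof. It only records the application the paper intends: $E$ is the Pauli group, $N=\{\pm1,\pm i\}$, and $Q=G\times G$, where the commutator of $E$ lands in $N$ via the symplectic pairing of Eq.~\eqref{eq:symplectic-commutation}.
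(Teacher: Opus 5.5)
Your proof is correct and takes the same route as the paper: normality from centrality, then the first isomorphism theorem applied to $\varphi$ with $\ker\varphi=N$ and $\varphi$ surjective. The only difference is that you unpack the theorem by hand (well-definedness, homomorphism, injectivity, surjectivity) where the paper cites it, and your observation that the abelianness of $Q$ is never used holds for the paper's proof as well.
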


\begin{proof}
Every central subgroup is normal, since $gng^{-1}=n$ for
$n\in N\leq Z(E)$ and every $g\in E$; hence $E/N$ is a group under
coset multiplication. The first isomorphism theorem applied to
$\varphi$ gives $E/\ker\varphi\cong\mathrm{im}\,\varphi$, i.e.\
(since $\ker\varphi=N$ and $\varphi$ is surjective) $E/N\cong Q$.
\end{proof}

The qubit Pauli group instantiates Lemma~\ref{lem:central-extension}:
take $Q=(\mathbb{Z}_2)^n\times(\mathbb{Z}_2)^n$, $N=\{\pm1,\pm i\}$,
and $E=\{\omega\,X^{\otimes x}Z^{\otimes z}:\omega\in N,\,x,z\in(\mathbb{Z}_2)^n\}$,
with $\varphi:E\to Q$ the forget-the-phase map
$\omega\,X^{\otimes x}Z^{\otimes z}\mapsto(x,z)$.
Then $N=Z(E)=\ker\varphi$, so $E/N\cong Q$: Pauli labels $(x,z)$ are
exactly the cosets of $N$ in $E$. The same lemma covers the odd-$p$
Heisenberg--Weyl group in Remark~\ref{rem:parity}; Proposition~
\ref{prop:qubit-decomposition}'s coefficient formula (Steps 1--2
below) does not depend on either instance.

\begin{proposition}[Restatement of Georges et al.
\cite{georges_et_al_2025}, Theorem~1]
\label{prop:qubit-decomposition}
For $x,z\in G=(\mathbb{Z}_2)^n$, write
$\mathrm{Bare}(x,z):=X^{\otimes x}Z^{\otimes z}$ and
$\mathrm{Genuine}(x,z):=i^{|x\wedge z|}\,\mathrm{Bare}(x,z)$
as in Eq.~\eqref{eq:genuine-def} --- the corresponding string in the
genuine Pauli basis, taking $Y:=iXZ$ wherever $x_i=z_i=1$.
Every $T\in\mathbb{C}^{2^n\times2^n}$ has a unique expansion
\begin{equation}
T=\sum_{x,z\in G}d_{x,z}\,\mathrm{Genuine}(x,z).
\label{eq:pauli-decomposition}
\end{equation}
For each fixed $x$, let $f_x\in\mathbb{C}^G$ be the fiber function
$f_x(v):=T_{v,v\oplus x}$ on $\Delta_x$, and let $e_{x,\cdot}$ be its
(normalized) Walsh--Hadamard transform:
\begin{equation}
e_{x,z}=\frac{1}{2^n}\sum_{v\in G}\chi_z(v)\,f_x(v)
=\frac{1}{2^n}\sum_{v\in G}(-1)^{\langle v,z\rangle}T_{v,v\oplus x}.
\label{eq:wht-inversion}
\end{equation}
Then
\begin{equation}
d_{x,z}=i^{|x\wedge z|}e_{x,z}
\label{eq:prop1-conclusion}
\end{equation}
for every $x,z\in G$, recovering Eq.~\eqref{eq:fiber-decomposition}.
Computing every $d_{x,z}$ this way --- $2^n$ independent
length-$2^n$ fast Walsh--Hadamard transforms, one per fiber
function $f_x$ --- costs $O(n\,4^n)$, against $O(2^{5n})$
for evaluating each of the $4^n$ coefficients from its defining trace
independently.
\end{proposition}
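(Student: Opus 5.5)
The plan is to compute the matrix entries of each bare string directly, read off the fiber structure, and invert one fiber at a time.

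\emph{Matrix entries of a bare string.} From the actions $(Z^{\otimes z}f)(v)=\chi_z(v)f(v)$ and $(X^{\otimes x}f)(v)=f(v\oplus x)$, one gets $X^{\otimes x}Z^{\otimes z}\ket{w}=\chi_z(w)\ket{w\oplus x}$. Hence
\[
\bra{v}\mathrm{Bare}(x,z)\ket{u}=\chi_z(v\oplus x)\,[u=v\oplus x].
\]
This entry is supported exactly on the fiber $\Delta_x$, with fiber function $v\mapsto(-1)^{\langle x,z\rangle}\chi_z(v)$. The sign $(-1)^{\langle x,z\rangle}$ from Eq.~\eqref{eq:symplectic-commutation} is where the convention must be tracked with care. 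I would use the ordering $Z^{\otimes z}X^{\otimes x}$ if needed, so that $\bra{v}Z^{\otimes z}X^{\otimes x}\ket{v\oplus x}=\chi_z(v)$ with no sign. The stated formula $d_{x,z}=i^{|x\wedge z|}e_{x,z}$ with $\mathrm{Genuine}=i^{|x\wedge z|}\mathrm{Bare}$ must then be reconciled as follows. Since $i^{|x\wedge z|}\cdot i^{|x\wedge z|}=(-1)^{|x\wedge z|}=(-1)^{\langle x,z\rangle}$, the phases combine consistently exactly when the matrix of $\mathrm{Bare}(x,z)$ on $\Delta_x$ is $(-1)^{\langle x,z\rangle}\chi_z(v)$. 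I expect this sign bookkeeping between $XZ$ versus $ZX$ ordering and $i^{|x\wedge z|}$ versus $i^{-|x\wedge z|}$ to be the main obstacle. I would settle it first by checking the single-qubit case $Y=iXZ$, $T=Y$.

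\emph{Expansion and uniqueness.} Since distinct $x$ give disjoint fibers $\Delta_x$ partitioning the $4^n$ matrix positions, $T=\sum_x T|_{\Delta_x}$. It therefore suffices to expand each $T|_{\Delta_x}$ in $\{\mathrm{Genuine}(x,z)\}_z$. On $\Delta_x$, this amounts to expanding $f_x$ in the basis $\{\chi_z\}$ of $\mathbb{C}^G$, up to the nonzero scalars $i^{|x\wedge z|}(-1)^{\langle x,z\rangle}$. Character orthogonality, $\sum_v\chi_z(v)\chi_{z'}(v)=2^n[z=z']$, gives the inversion $f_x=\sum_z e_{x,z}\chi_z$ with $e_{x,z}$ as in Eq.~\eqref{eq:wht-inversion}. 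Absorbing the scalars yields Eq.~\eqref{eq:prop1-conclusion}. Uniqueness follows because the $4^n$ strings $\mathrm{Genuine}(x,z)$ are then shown to span the $4^n$-dimensional space $\mathbb{C}^{2^n\times2^n}$, so they form a basis. Equivalently, they are orthogonal under $\mathrm{tr}(A^\dagger B)$, since disjoint fibers and character orthogonality give $\mathrm{tr}(\mathrm{Genuine}(x,z)^\dagger\mathrm{Genuine}(x',z'))=2^n\delta_{xx'}\delta_{zz'}$.

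\emph{Cost.} For the cost I would invoke the standard butterfly recursion. Since $\chi_z$ factorizes over coordinates, the transform on $(\mathbb{Z}_2)^n$ is $H^{\otimes n}$, which is applied as $n$ passes of $2\times2$ butterflies over $2^n$ entries. This gives $W(n)=2W(n-1)+O(2^n)$, so $W(n)=O(n2^n)$ per fiber. Over $2^n$ fibers this totals $O(n4^n)$. Forming the $f_x$ is a relabeling of entries at cost $O(4^n)$, and the phases $i^{|x\wedge z|}$ cost $O(n4^n)$ at most, or $O(4^n)$ with incremental popcounts. For the naive baseline, each coefficient is $2^{-n}\mathrm{tr}(\mathrm{Genuine}(x,z)^\dagger T)$. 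Forming the dense string at $O(2^n\cdot 4^n)$ via Kronecker products, and multiplying densely at $O(2^{3n})$ before taking the trace, gives $O(2^{3n})$ per coefficient. Over $4^n$ coefficients this is $O(2^{5n})$.
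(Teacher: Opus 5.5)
Your proposal is correct and follows the paper's proof: the fiberwise matrix elements $\chi_z(v)\chi_z(x)$ of $X^{\otimes x}Z^{\otimes z}$, character orthogonality on each $f_x$, and the phase bookkeeping $i^{|x\wedge z|}(-1)^{|x\wedge z|}=i^{-|x\wedge z|}$, followed by the butterfly recursion and the $4^n\cdot O(2^{3n})$ trace baseline. Beyond the paper, you make existence and uniqueness explicit via the fiber partition and Hilbert--Schmidt orthogonality, which the paper takes for granted. Drop the contemplated fallback to the ordering $Z^{\otimes z}X^{\otimes x}$: with that ordering the same computation gives $d_{x,z}=i^{-|x\wedge z|}e_{x,z}$, whereas the sign $(-1)^{\langle x,z\rangle}$ you already computed for $X^{\otimes x}Z^{\otimes z}$ is exactly what makes the stated formula hold.
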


\begin{proof}
\emph{Step 1: the character sum, in the bare basis.} Expand $T$ in
the bare basis as
$T=\sum_{x',z'}c_{x',z'}\,\mathrm{Bare}(x',z')$, where
$c_{x',z'}$ is the coefficient of the (possibly non-Hermitian) bare
product $\mathrm{Bare}(x',z')$, as opposed to the proposition's
phased coefficient $d_{x',z'}$. Fix $x$, and let
$f_x(v):=T_{v,v\oplus x}$ be the corresponding fiber function.
Since
$X^{\otimes x}\ket{w}=\ket{w\oplus x}$ for every $w$, the matrix element
$\langle v|\mathrm{Bare}(x,z)|w\rangle$ is nonzero only at $w=v\oplus x$
(on $(\mathbb{Z}_2)^n$, $-x=x$), and the nonzero value is
$\chi_z(w)=\chi_z(v\oplus x)=\chi_z(v)\chi_z(x)$.
On the fiber $\Delta_x$: if $x'\neq x$, then
$\langle v|\mathrm{Bare}(x',z')|v\oplus x\rangle=0$ for every $v$ and
$z'$, so only the summands with offset $x$ contribute to $f_x$, and
they give
\begin{equation}
f_x(v) = \sum_z c_{x,z}\,\chi_z(x)\,\chi_z(v),
\label{eq:group-expansion}
\end{equation}
the character expansion of the fiber function $f_x$.
Orthogonality of $\{\chi_z\}_{z\in G}$ inverts that
expansion to
\begin{equation}
\frac{1}{2^n}\sum_v\chi_z(v)\,f_x(v) = c_{x,z}\,\chi_z(x),
\label{eq:inversion-applied}
\end{equation}
i.e.\ Eq.~\eqref{eq:wht-inversion}'s $e_{x,z}$ satisfies
\begin{equation}
e_{x,z}=c_{x,z}\,\chi_z(x)=c_{x,z}\,(-1)^{\langle x,z\rangle}.
\label{eq:step1-conclusion}
\end{equation}

\emph{Step 2: from bare to genuine.} Every $\mathrm{Bare}(x,z)$ is a
\emph{real} matrix, so $\{\mathrm{Bare}(x,z)\}_{x,z}$ spans only the
real-linear subspace of $\mathbb{C}^{2^n\times2^n}$ --- real dimension
$4^n$, half of $\mathbb{C}^{2^n\times2^n}$'s real dimension $2\cdot
4^n$ --- $T$'s ``real-symmetric part.''
Admitting $Y:=iXZ$ at each overlapping qubit supplies the
missing purely-imaginary complement, at the cost of one factor of $i$
per such qubit --- which is exactly why
Eq.~\eqref{eq:genuine-def} defines
$\mathrm{Genuine}(x,z):=i^{|x\wedge z|}\,\mathrm{Bare}(x,z)$.
Equating the two expansions
$T=\sum_{x,z}c_{x,z}\,\mathrm{Bare}(x,z)=\sum_{x,z}d_{x,z}\,
\mathrm{Genuine}(x,z)$ therefore gives
\begin{equation}
c_{x,z}=i^{|x\wedge z|}d_{x,z}.
\label{eq:bare-to-genuine}
\end{equation}
Combined with Step 1's Eq.~\eqref{eq:step1-conclusion} and the
identity $(-1)^{\langle x,z\rangle}=(-1)^{|x\wedge z|}$, this gives
\begin{equation}
d_{x,z}=e_{x,z}\,(-1)^{\langle x,z\rangle}i^{-|x\wedge z|}
= i^{|x\wedge z|}e_{x,z},
\label{eq:step2-chain}
\end{equation}
the proposition's stated relation, Eq.~\eqref{eq:prop1-conclusion}.

\emph{Step 3: the transform is fast.} For the fixed $x$, the fiber
function $f_x\in\mathbb{C}^G$ is already $f_x(v)=T_{v,v\oplus x}$ ---
one entry of $T$ per $v$ on $\Delta_x$, which
Eq.~\eqref{eq:wht-inversion} sums over. Writing
\begin{equation*}
\hat f_x(z):=\sum_v(-1)^{\langle v,z\rangle}f_x(v)
\end{equation*}
for the ordinary (unnormalized) Walsh--Hadamard transform of $f_x$,
that equation says exactly
\begin{equation}
e_{x,z}=\frac{1}{2^n}\hat f_x(z)
\label{eq:wht-of-f}
\end{equation}
Evaluating $\hat f_x$ from that sum at every
$z\in G$ costs $O(4^n)$. It is fast because it recurses:
writing $v=(v_1,v')$ with $v_1\in\mathbb{Z}_2$ and
$v'\in(\mathbb{Z}_2)^{n-1}$, fixing $v_1$ restricts $f_x$ to two
half-domain functions
\begin{equation*}
f_x^{(0)}(v'):=f_x(0,v'), \qquad f_x^{(1)}(v'):=f_x(1,v'),
\end{equation*}
each on $(\mathbb{Z}_2)^{n-1}$, and write
$\widehat{f}_x^{(0)},\widehat{f}_x^{(1)}$ for their length-$2^{n-1}$
Walsh--Hadamard transforms. Writing $z=(z_1,z')$ the same way as $v$,
$\hat f_x$ is recovered from
$\widehat{f}_x^{(0)},\widehat{f}_x^{(1)}$ by
\begin{equation}
\hat f_x(0,z') = \widehat{f}_x^{(0)}(z')+\widehat{f}_x^{(1)}(z'),
\qquad
\hat f_x(1,z') = \widehat{f}_x^{(0)}(z')-\widehat{f}_x^{(1)}(z'),
\label{eq:wht-butterfly}
\end{equation}
a butterfly combining two length-$2^{n-1}$ transforms into one of
length $2^n$ at the cost of one addition and one subtraction per
entry of $z'$, not a fresh transform. This is the same
halving-and-combining structure as the ordinary FFT, over
$(\mathbb{Z}_2)^n$ instead of $\mathbb{Z}_N$: each of the $n$
recursion levels touches every one of $2^n$ entries once, giving
$O(n2^n)$ for one fixed fiber, and $O(n\,4^n)$ once this is repeated
over all $2^n$ values of $x$.

Compare this to computing every coefficient directly from its
defining projection,
\begin{equation}
d_{x,z} = \mathrm{Tr}\big(T\cdot i^{|x\wedge z|}X^{\otimes x}Z^{\otimes z}\big)\big/2^n,
\label{eq:direct-trace}
\end{equation}
with no transform at all: one
trace per $(x,z)$ pair, $4^n$ pairs in total. Each trace, formed as a
full $2^n\times2^n$ matrix product before summing the diagonal, costs
$O(2^{3n})$, for a total of $O(2^{2n})\cdot O(2^{3n})=O(2^{5n})$ ---
the naive baseline the fast transform above replaces.
\end{proof}

\begin{remark}[Why the genuine section]
\label{rem:why-genuine}
Over $\mathbb{C}$, the bare products $\{\mathrm{Bare}(x,z)\}$ are already
a basis of $\mathbb{C}^{2^n\times2^n}$: Step~1 recovers every
$c_{x,z}$ at the same $O(n\,4^n)$ transform cost, and Step~2's map
$c_{x,z}=i^{|x\wedge z|}d_{x,z}$ is only a phase reparametrization.
(The ``real-linear subspace'' remark in Step~2 concerns real
coefficients and real matrices, not a failure of the $\mathbb{C}$-basis.)
The paper expands in $\{\mathrm{Genuine}(x,z)\}$ because each genuine
term $P$ is Hermitian with $P^2=I$, so $e^{-i\theta P}$ is unitary for
all real $\theta$ and is the elementary Trotter factor for that term.
At an overlapping qubit the bare product $XZ$ is anti-Hermitian,
$(XZ)^\dagger=-XZ$, with $(XZ)^2=-I$, so $e^{-i\theta\,XZ}$ is unitary
only at $\theta=0$ and is not a Trotter factor for nonzero angle.
Termwise Hermiticity is what matches a decomposition term to a real
$\pm1$ Pauli observable; nothing above requires $T$ itself to be
Hermitian.
\end{remark}

Georges et al.\ prove Proposition~\ref{prop:qubit-decomposition} by
direct tensor factorization of the Pauli matrices; to our knowledge,
its derivation from characters of $G=(\mathbb{Z}_2)^n$ above ---
the $2^n$ fibers $\Delta_x=(0,x)+\Delta_0$ (cosets of the diagonal in
$G\times G$), each expanded in $\{\chi_z\}$ via $f_x$, then phased to
the genuine basis --- has not been given before.

Proposition~\ref{prop:qubit-decomposition}'s own construction --- one
independent length-$2^n$ transform per fiber function $f_x$ ---
already shows what Section~\ref{sec:method} depends on: the fibers
may be processed in any grouping, in any order, without recombining
anything afterward, and still recover exactly the same decomposition.

\begin{lemma}[Chunk independence]
\label{lem:chunk-independence}
Partition $G=(\mathbb{Z}_2)^n=X_1\sqcup\cdots\sqcup X_k$. For each $j$,
let $D_j$ be the set of triples $(x,z,d_{x,z})$ with $x\in X_j$ and
$z\in G$, where $d_{x,z}$ is given by
Eq.~\eqref{eq:prop1-conclusion}. Then $\bigcup_j D_j$ is exactly the
full set of coefficients in Eq.~\eqref{eq:pauli-decomposition}.
Moreover, each $D_j$ is determined by $T$ and $X_j$ alone: computing
it neither requires nor affects any $D_i$ for $i\neq j$.
\end{lemma}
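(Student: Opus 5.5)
The plan is to prove the two claims of the lemma separately, using only Proposition~\ref{prop:qubit-decomposition} and the elementary properties of a set partition.

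\emph{Coverage.} Because the $X_j$ are pairwise disjoint and their union is $G$, every pair $(x,z)\in G\times G$ has $x$ in exactly one block $X_j$. That pair therefore contributes its triple $(x,z,d_{x,z})$ to exactly one $D_j$. So $\bigcup_j D_j$ is a disjoint union with $\sum_j |X_j|\cdot 2^n=4^n$ triples, one for each label $(x,z)$. By the uniqueness clause of Proposition~\ref{prop:qubit-decomposition}, these $d_{x,z}$ are precisely the coefficients of Eq.~\eqref{eq:pauli-decomposition}. It follows that the union recovers the full expansion, with no label repeated and none missing.

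\emph{Independence.} Here I would read off Eqs.~\eqref{eq:wht-inversion} and~\eqref{eq:prop1-conclusion} which data each $d_{x,z}$ actually uses. For fixed $x$, the value $d_{x,z}=i^{|x\wedge z|}2^{-n}\sum_v(-1)^{\langle v,z\rangle}T_{v,v\oplus x}$ depends only on three things: the labels $x$ and $z$, and the fiber function $f_x$, that is, the entries of $T$ on $\Delta_x$. Consequently $D_j$ is a function of the family $\{f_x\}_{x\in X_j}$, which is determined by $T$ and $X_j$ alone. Two further facts give the claim that computing $D_j$ ``neither requires nor affects'' any $D_i$. First, the fibers $\Delta_x$ are pairwise disjoint cosets of the diagonal, so distinct blocks read disjoint sets of entries of $T$. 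Second, the computation only reads $T$, and each output triple is a closed-form function of its own inputs, so no value computed for block $i$ enters the formula for block $j$. Hence the $D_j$ may be computed in any order, or concurrently, with identical results.

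I expect the only delicate point to be making ``neither requires nor affects'' precise, since it is a statement about data dependence rather than about algebra. I would formalize it as the identity $D_j=\Phi(T|_{\bigcup_{x\in X_j}\Delta_x},X_j)$ for an explicit map $\Phi$ given by the per-fiber Walsh--Hadamard transform of Step~3. In that form, independence of the other blocks is immediate.
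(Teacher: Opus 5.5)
Your proposal is correct and is essentially the paper's argument: $d_{x,z}$ depends only on $(T,x)$ through the fiber function $f_x$ on $\Delta_x$, so each $D_j$ is determined by $T$ and $X_j$, and the partition property places every triple in exactly one $D_j$. Your explicit appeal to the uniqueness clause of Proposition~\ref{prop:qubit-decomposition}, and your note that the fibers $\Delta_x$ are pairwise disjoint, are small refinements of the same proof rather than a different route.
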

\begin{proof}
For each $x$, the fiber function $f_x(v)=T_{v,v\oplus x}$ is read from
$T$ on $\Delta_x$ alone, so $e_{x,\cdot}$ (Eq.~\eqref{eq:wht-inversion})
and $d_{x,z}=i^{|x\wedge z|}e_{x,z}$ depend only on $(T,x)$. Thus for any
chunk $X_j$, the set $D_j$ is determined by $T$ and $X_j$, with no
reference to $X_i$ for $i\neq j$. Because the $X_j$ partition
$G$, every triple $(x,z,d_{x,z})$ lies in exactly one
$D_j$, and $\bigcup_j D_j$ is the full coefficient set of
Eq.~\eqref{eq:pauli-decomposition}.
\end{proof}

Proposition~\ref{prop:qubit-decomposition} and
Lemma~\ref{lem:chunk-independence} together are the chunked streaming
procedure recorded as Algorithm~\ref{alg:streaming-fwht}.
For each active offset $x$, the inner step forms the fiber function
$f_x$, applies the fast Walsh--Hadamard transform, and phases by
$i^{|x\wedge z|}$ to obtain the genuine coefficients $d_{x,z}$
(Eqs.~\eqref{eq:wht-inversion}--\eqref{eq:prop1-conclusion}).
Lemma~\ref{lem:chunk-independence} supplies the outer loop: those
independent per-fiber jobs may be bagged into any partition of the
active offsets, run in any order, and \emph{yielded} as a stream of
triples with no recombination step --- the union across chunks is
still Eq.~\eqref{eq:pauli-decomposition}. An optional threshold
$\tau$ simply drops negligible coefficients at yield time; it does
not change the transform.

\begin{algorithm}[t]
\caption{Chunked streaming Pauli decomposition (qubits)}
\label{alg:streaming-fwht}
\begin{algorithmic}[1]
\Require Matrix $T\in\mathbb{C}^{2^n\times 2^n}$ (dense or sparse
entries); $G=(\mathbb{Z}_2)^n$; optional partition
$X_1\sqcup\cdots\sqcup X_k$ of the active offsets
$\{x:f_x\not\equiv 0\}$ (equivalently,
$\{x:\exists v\text{ with }T_{v,v\oplus x}\neq 0\}$); optional
threshold $\tau\geq 0$ (default $0$).
\Ensure Stream of triples $(x,z,d_{x,z})$ whose union is the
coefficient set of Eq.~\eqref{eq:pauli-decomposition} (entries with
$|d_{x,z}|<\tau$ omitted when $\tau>0$).
\If{no partition is supplied}
\State let $X_1$ be the set of all active offsets and set
$k\leftarrow 1$
\EndIf
\For{$j=1,\ldots,k$}
\ForAll{$x\in X_j$}
\State $f_x\leftarrow\bigl(T_{v,v\oplus x}\bigr)_{v\in G}$
\Comment{fiber function on $\Delta_x$}
\State $e_{x,\cdot}\leftarrow 2^{-n}\,\mathrm{FWHT}(f_x)$
\Comment{Eq.~\eqref{eq:wht-inversion}}
\ForAll{$z\in G$}
\State $d_{x,z}\leftarrow i^{|x\wedge z|}\,e_{x,z}$
\Comment{Eq.~\eqref{eq:prop1-conclusion}}
\If{$|d_{x,z}|\ge\tau$}
\State \textbf{yield} $(x,z,d_{x,z})$
\EndIf
\EndFor
\EndFor
\EndFor
\end{algorithmic}
\end{algorithm}

The same character-theoretic route, run over
$G=(\mathbb{Z}_p)^n$ instead of $(\mathbb{Z}_2)^n$, decomposes an
operator on $n$ qudits of prime dimension $p$ in the analogous
generalized-Pauli basis $X^{\otimes x}Z^{\otimes z}$
\cite{gottesman_1998_higher_dim} --- clock-and-shift matrices of this
general type go back to Sylvester's \emph{nonions}
\cite{sylvester_1882_nonions} --- where
\begin{equation}
X\ket{v} = \ket{v+1\bmod p}, \qquad
Z\ket{v} = \omega^v\ket{v}, \qquad
\omega = e^{2\pi i/p},
\label{eq:qudit-XZ}
\end{equation}
with no restriction to $p=2$ anywhere in the argument. Shift labels
$x\in G$ again partition $T$ into $p^n$ fibers
$\Delta_x=\{(v,v-x):v\in G\}$ with fiber functions
$f_x(v):=T_{v,v-x}\in\mathbb{C}^G$ --- support at $w=v-x$, not
$v+x$, since $-x\neq x$ in general when $p$ is odd.

\begin{corollary}
\label{cor:qudit-decomposition}
Let $p$ be prime, $G=(\mathbb{Z}_p)^n$, $\omega=e^{2\pi i/p}$, and
$\langle v,z\rangle=\sum_i v_iz_i\bmod p$. Every
$T\in\mathbb{C}^{p^n\times p^n}$ has a unique expansion
\begin{equation}
T=\sum_{x,z\in G}d_{x,z}\,X^{\otimes x}Z^{\otimes z}
\label{eq:qudit-decomposition}
\end{equation}
in the Heisenberg--Weyl basis of Eq.~\eqref{eq:qudit-XZ} (no genuine
phase: the un-normalized products already span). For each fixed $x$,
let $f_x\in\mathbb{C}^G$ be the fiber function
$f_x(v):=T_{v,v-x}$ on $\Delta_x=\{(v,v-x):v\in G\}$, and let
$e_{x,\cdot}$ be its discrete Fourier transform over $G$:
\begin{equation}
e_{x,z}=\frac{1}{p^n}\sum_{v\in G}\omega^{-\langle v,z\rangle}f_x(v)
=\frac{1}{p^n}\sum_{v\in G}\omega^{-\langle v,z\rangle}T_{v,v-x}.
\label{eq:qudit-dft-inversion}
\end{equation}
Then
\begin{equation}
d_{x,z}=\omega^{\langle x,z\rangle}e_{x,z}
\label{eq:cor1-conclusion}
\end{equation}
for every $x,z\in G$. Every $d_{x,z}$ is recovered by
$p^n$ independent length-$p^n$ fast Fourier transforms, one per fiber
function $f_x$, in $O(n\,p^{2n}\log p)$ total.
\end{corollary}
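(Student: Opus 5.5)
The plan is to run Steps~1--3 of the proof of Proposition~\ref{prop:qubit-decomposition} over $G=(\mathbb{Z}_p)^n$. Step~2 disappears, since the corollary uses the un-phased products directly. The three tasks are: compute the matrix elements of $X^{\otimes x}Z^{\otimes z}$ from Eq.~\eqref{eq:qudit-XZ}, invert the resulting character expansion on each fiber by orthogonality of the characters $\chi_z(v)=\omega^{\langle v,z\rangle}$, and count the cost of a multidimensional FFT over $(\mathbb{Z}_p)^n$.

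\emph{Step 1 (matrix elements and fibers).} From Eq.~\eqref{eq:qudit-XZ}, $X^{\otimes x}\ket{u}=\ket{u+x}$ and $Z^{\otimes z}\ket{w}=\omega^{\langle w,z\rangle}\ket{w}$. Hence $X^{\otimes x}Z^{\otimes z}\ket{w}=\omega^{\langle w,z\rangle}\ket{w+x}$. So $\bra{v}X^{\otimes x}Z^{\otimes z}\ket{w}$ is nonzero only at $w=v-x$, where it equals $\omega^{\langle v-x,z\rangle}=\omega^{\langle v,z\rangle}\omega^{-\langle x,z\rangle}$. This uses bilinearity of $\langle\cdot,\cdot\rangle$ mod $p$ and $\omega^p=1$. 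Consequently only offset-$x$ terms reach $\Delta_x$. For any expansion of the form~\eqref{eq:qudit-decomposition} this gives
\begin{equation*}
f_x(v)=T_{v,v-x}=\sum_{z\in G}d_{x,z}\,\omega^{-\langle x,z\rangle}\,\omega^{\langle v,z\rangle},
\end{equation*}
which is the analogue of Eq.~\eqref{eq:group-expansion}.

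\emph{Step 2 (inversion, existence, uniqueness).} The characters of $(\mathbb{Z}_p)^n$ satisfy the orthogonality relation $p^{-n}\sum_v\omega^{\langle v,z'-z\rangle}=\delta_{z,z'}$. For $n=1$ this is the geometric sum of $p$-th roots of unity; for general $n$ it factorizes over coordinates. Applying $p^{-n}\sum_v\omega^{-\langle v,z\rangle}(\cdot)$ to the display above gives $e_{x,z}=d_{x,z}\,\omega^{-\langle x,z\rangle}$, which is Eq.~\eqref{eq:cor1-conclusion}. This proves uniqueness, because the coefficients are forced by $T$.

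For existence, define $d_{x,z}:=\omega^{\langle x,z\rangle}e_{x,z}$ for an arbitrary $T$. Character completeness (Fourier inversion on $G$) reconstructs each $f_x$, and the fibers $\Delta_x$ partition $G\times G$. Together these show that $\sum d_{x,z}X^{\otimes x}Z^{\otimes z}$ agrees with $T$ entrywise. This also confirms that the $p^{2n}$ products form a $\mathbb{C}$-basis, so no genuine phase is needed.

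\emph{Step 3 (cost).} For each of the $p^n$ fibers, $e_{x,\cdot}$ is a DFT over $(\mathbb{Z}_p)^n$. I would factor it row--column style, as in the butterfly~\eqref{eq:wht-butterfly}: fixing all coordinates but one leaves a length-$p$ DFT along that coordinate. That gives $n$ stages, each consisting of $p^{n-1}$ length-$p$ DFTs.

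The main obstacle is bounding a single length-$p$ DFT by $O(p\log p)$, since $p$ is prime and admits no Cooley--Tukey split. I would invoke Rader's algorithm, which turns it into a cyclic convolution of length $p-1$, or Bluestein's chirp-$z$ method, which embeds it in a power-of-two convolution. Either gives $O(p\log p)$. Then one fiber costs $O(n\,p^{n-1}\cdot p\log p)=O(n\,p^n\log p)$, and all $p^n$ fibers give $O(n\,p^{2n}\log p)$. The per-$(x,z)$ phase $\omega^{\langle x,z\rangle}$ adds only $O(p^{2n})$.
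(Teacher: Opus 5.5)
Your proposal is correct and follows the paper's proof essentially step for step: the same matrix element $\omega^{\langle v-x,z\rangle}$ on $\Delta_x$, the same orthogonality inversion giving $e_{x,z}=d_{x,z}\,\omega^{-\langle x,z\rangle}$, and the same per-fiber FFT count. Your explicit existence argument (Fourier inversion together with the partition of $G\times G$ by the fibers) and your appeal to Rader or Bluestein for the prime-length DFTs are sound refinements of points the paper leaves implicit; in particular, its ``radix-$p$ Cooley--Tukey at standard cost'' tacitly needs exactly that $O(p\log p)$ bound per length-$p$ DFT to reach $O(n\,p^n\log p)$ per fiber.
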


\begin{proof}
The same character-sum argument as in
Proposition~\ref{prop:qubit-decomposition}, Step~1, applies for prime
$p$. Since $X^{\otimes x}\ket{w}=\ket{w+x}$, the matrix element
$\langle v|X^{\otimes x}Z^{\otimes z}|w\rangle$ is nonzero only at
$w=v-x$. (At $p=2$, $-x=x$, recovering the $w=v+x$ used there.) That
nonzero matrix element equals
\begin{equation}
\langle v|X^{\otimes x}Z^{\otimes z}|v-x\rangle
=
\omega^{\langle z,v-x\rangle} = \chi_z(v)\,\chi_z(x)^{-1},
\qquad \chi_z(v):=\omega^{\langle v,z\rangle},
\label{eq:qudit-matrix-element}
\end{equation}
with $\{\chi_z\}_{z\in G}$ the characters of $G$.
Now fix $x$ and expand $T$ in the Heisenberg--Weyl basis,
$T=\sum_{x',z'}d_{x',z'}\,X^{\otimes x'}Z^{\otimes z'}$.
Only the summands with shift label $x'=x$ meet the fiber
$\Delta_x$: for $x'\neq x$ one has
$\langle v|X^{\otimes x'}Z^{\otimes z'}|v-x\rangle=0$ for all $v,z'$,
so the fiber function $f_x$ from the corollary statement satisfies
\begin{equation}
f_x(v) = \sum_z d_{x,z}\,\chi_z(x)^{-1}\,\chi_z(v),
\label{eq:qudit-group-expansion}
\end{equation}
the character expansion of $f_x$. Orthogonality of
$\{\chi_z\}_{z\in G}$ inverts that expansion to
\begin{equation}
\frac{1}{p^n}\sum_v\chi_z(v)^{-1}\,f_x(v)
=
d_{x,z}\,\chi_z(x)^{-1},
\label{eq:qudit-inversion-applied}
\end{equation}
i.e.\ Eq.~\eqref{eq:qudit-dft-inversion}'s $e_{x,z}$
(with kernel $\omega^{-\langle v,z\rangle}=\chi_z(v)^{-1}$) satisfies
$e_{x,z}=d_{x,z}\,\chi_z(x)^{-1}$, and therefore
\begin{equation}
d_{x,z} = \chi_z(x)\,e_{x,z} = \omega^{\langle x,z\rangle}e_{x,z}.
\label{eq:qudit-step1-conclusion}
\end{equation}
Step 2 does not arise: the un-normalized basis $\{X^{\otimes
x}Z^{\otimes z}\}$ carries no phase to account for, so
Lemma~\ref{lem:central-extension} has nothing to apply to here --- see
Remark~\ref{rem:parity} below for the odd-$p$ case where it does.
Step 3 is the ordinary radix-$p$ fast
Fourier transform \citep{cooley_tukey_1965} on $G$ in
place of the radix-$2$ Walsh--Hadamard transform, at the standard
cost $O(n\,p^n\log p)$ per fiber, hence $O(n\,p^{2n}\log p)$ over all
$p^n$ values of $x$.

Compare this to computing every coefficient from its Hilbert--Schmidt
projection, with no Fourier transform at all:
\begin{equation}
d_{x,z} = \mathrm{Tr}\big(T\cdot X^{\otimes x}Z^{\otimes z}\big)\big/p^n,
\label{eq:qudit-direct-trace}
\end{equation}
one trace per $(x,z)$ pair, $p^{2n}$ pairs in total. Each trace,
formed as a full $p^n\times p^n$ matrix product before summing the
diagonal, costs $O(p^{3n})$, for a total of
$O(p^{2n})\cdot O(p^{3n})=O(p^{5n})$ --- the same
$\dim^2\cdot\dim^3=\dim^5$ naive scaling as in
Proposition~\ref{prop:qubit-decomposition}, with $\dim=p^n$ in place
of $2^n$.
\end{proof}

The Heisenberg--Weyl products $\{X^{\otimes x}Z^{\otimes z}\}$ used in
Corollary~\ref{cor:qudit-decomposition} (Bare products, when $p=2$)
are already orthogonal in the Hilbert--Schmidt inner product
$\langle A,B\rangle=\mathrm{Tr}(A^\dagger B)$, for every prime $p$.
Orthonormality is only the global rescaling by $p^{-n/2}$.
Phases enter for other reasons --- a packaging section at odd $p$
(Remark~\ref{rem:phase-convention}), and termwise Hermiticity via $Y$
at $p=2$ (Proposition~\ref{prop:qubit-decomposition}).

\begin{remark}
\label{rem:phase-convention}
For odd $p$, the inverse $2^{-1}\bmod p$ exists, so one may replace the
un-normalized generators by the phased section
\begin{equation}
W(a,b):=\omega^{-2^{-1}\langle a,b\rangle}Z^{\otimes a}X^{\otimes b}
\label{eq:W-phase-convention}
\end{equation}
(note $Z$ before $X$; cf.\ \citep{sharma_ghosh_sazim_2024_hw_bloch}).
Expanding $T$ in $\{W(a,b)\}$ yields a recovery formula of the same
single-phase shape as Eq.~\eqref{eq:cor1-conclusion}, with no extra
symplectic sign to track. That is a packaging convention for the
coefficients, not a defect the un-normalized basis had.

The same phase does \emph{not} make individual $W(a,b)$ Hermitian.
For any phase convention,
\begin{equation}
(X^{\otimes a}Z^{\otimes b})^\dagger
=
\omega^{\langle a,b\rangle}
X^{\otimes(-a)}Z^{\otimes(-b)},
\label{eq:dagger-identity}
\end{equation}
so a single term is self-adjoint only if
$X^{\otimes(-a)}Z^{\otimes(-b)}=X^{\otimes a}Z^{\otimes b}$, i.e.\
$2a\equiv 2b\equiv 0\bmod p$. For odd $p$, where $2$ is invertible,
this forces $a=b=0$. At $p=2$, by contrast, $-a\equiv a$ holds for
every $a$, so $Y:=iXZ$ can restore Hermiticity termwise --- as in
Proposition~\ref{prop:qubit-decomposition}.
\end{remark}

\begin{remark}
\label{rem:parity}
Failure of termwise Hermiticity does not block a group law. Let $E$ be
the group generated by $\{W(a,b)\}$ together with the $p$-th roots of
unity, let $Q=(\mathbb{Z}_p)^n\times(\mathbb{Z}_p)^n$, and let
$N=\{zI:z^p=1\}$. The map $\varphi:E\to Q$, $W(a,b)\mapsto(a,b)$, is
a surjective homomorphism with kernel $N$. 
Lemma~\ref{lem:central-extension} therefore applies: $E$ is a central
extension of $Q$ by $N$ --- the same structural statement as for
qubits, with $|N|=p$ in place of $\{\pm1,\pm i\}$'s order~$4$. The
section $s_p(a,b):=W(a,b)$ is the odd-$p$ analogue of the genuine
section $\mathrm{Genuine}(x,z)$ used for qubits in
Proposition~\ref{prop:qubit-decomposition}.
\end{remark}

\section{Case study: coupled harmonic oscillators}
\label{sec:motivation}

As a concrete family on which to measure Pauli-decomposition cost ---
the $O(n\,4^n)$ character-sum / Walsh--Hadamard scale against any
cheaper structure one might have hoped the encoding would retain ---
we take a fully coupled network of $N$ classical oscillators linked by
Hooke's-law springs. Following Babbush et
al.\ \citep{babbush_2023_oscillators}, the harmonic normal-mode problem
$\mu\ddot{x}=-Kx$ with spring-constant matrix $K$ and diagonal mass
matrix $\mu$ is simulated by the time evolution generated by the block
Hamiltonian
\begin{equation}
H:=-\begin{pmatrix}0&B\\B^\dagger&0\end{pmatrix},\qquad
BB^\dagger=\mu^{-1/2}K\mu^{-1/2},
\label{eq:oscillator-hamiltonian}
\end{equation}
on $\mathbb{C}^{N+M}$, with $M$ the number of springs. For the
fully-coupled family used below, each mass has an on-site spring and a
spring to every other mass, so
$M=\binom{N}{2}+N=\binom{N+1}{2}$. (A path graph, with only
$M=N-1$ springs, yields a much smaller $N+M$ and padded qubit count
$n$; it is not the family measured below.) The Hamiltonian dimension
$N+M$ therefore grows quadratically in $N$ before padding to
$n=\lceil\log_2(N+M)\rceil$ qubits:
\begin{table}[h]
\centering
\begin{tabular}{rrrr}
\toprule
$N$ (oscillators) & $M=\binom{N+1}2$ (springs) & $N+M$ (Hamiltonian dim.) & $n$ (qubits) \\
\midrule
2  & 3    & 5    & 3 \\
4  & 10   & 14   & 4 \\
8  & 36   & 44   & 6 \\
16 & 136  & 152  & 8 \\
32 & 528  & 560  & 10 \\
\bottomrule
\end{tabular}
\caption{Oscillator count against Hamiltonian dimension and qubit
count, for the fully-coupled family.}
\label{tab:oscillator-sizing}
\end{table}

Because $N+M=\Theta(N^2)$, one has $n=O(\log N)$: doubling $N$
increases the qubit count by about two
(Table~\ref{tab:oscillator-sizing}) --- each such step multiplies the
$O(n\,4^n)$ decomposition cost by sixteen. A network with tens of
masses already reaches the $n\ge10$ regime (e.g.\ $N=32$ gives $n=10$
in the table). At those sizes every entry of $H$ is still fixed by the
network's physical parameters alone, so extracting its Pauli
decomposition is a pure linear-algebra step with no further physics in
it. The family is therefore a convenient benchmark for
Pauli-decomposition cost.

\section{Memory-bounded streaming implementation}
\label{sec:method}

Proposition~\ref{prop:qubit-decomposition} establishes a time bound ---
$O(n\cdot2^n)$ per transform and $O(n\cdot4^n)$ overall --- and nothing
beyond that: neither peak memory, nor the leading constant $C$ in the
$O(n\cdot4^n)$ estimate (runtime at most $C\,n\,4^n$ for large $n$),
nor multi-core behaviour. The present section makes those three
quantities concrete for the deployed streaming realization of
Algorithm~\ref{alg:streaming-fwht}, and records how the implementation
is checked against the proposition.

Fix a size convention: each surviving coefficient is stored as an
$(x,z,\text{coefficient})$ triple --- two 64-bit indices and one
128-bit complex value --- hence 32 bytes. On the oscillator family of
Section~\ref{sec:motivation} at $N=150$ ($n=14$ qubits,
$\dim=16384$), the operator has $91{,}652{,}096$ nonzero Pauli
coefficients, so the triples alone cost
$91{,}652{,}096\times32\text{ bytes}\approx2.7$~GiB. Proposition~
\ref{prop:qubit-decomposition} does not specify storage. A naive
implementation that materializes every fiber function
$f_x(v)=T_{v,v\oplus x}$ before any Walsh--Hadamard transform
(Eq.~\eqref{eq:wht-inversion}) would allocate a second buffer of
comparable size. At this $N$ the sparse matrix has $45{,}300$
nonzeros on $n_{\text{active}}:=11{,}189$ distinct offsets~$x$;
stacking the corresponding length-$\dim$ fibers as one dense
$(n_{\text{active}},\dim)$ complex128 array costs
$n_{\text{active}}\times\dim\times16\text{ bytes}=11{,}189\times
16384\times16\text{ bytes}\approx2.7$~GiB on top of the answer.
Neither footprint enters the proposition's complexity statement; at
this size both already reach multi-gigabyte workstation scale. The
subsections below close the gap on memory, leading constants, and
parallelism; correctness is checked separately.

\subsection{Peak memory and streaming accumulation}

Under Algorithm~\ref{alg:streaming-fwht}, peak memory scales as
$O(\mathrm{chunk\_size}\cdot\dim+n_{\text{terms}})$. The first term is
the live working set: at most $\mathrm{chunk\_size}$ fiber functions of
length $\dim$ are built and Walsh--Hadamard-transformed together, then
discarded; in particular, no $(n_{\text{active}},\dim)$ accumulator of
all active fibers is retained. The second term is the retained answer:
if all surviving $(x,z,d)$ triples are kept until return, peak memory
also grows with $n_{\text{terms}}$. Using the fast butterfly~(Eq.~\eqref{eq:wht-butterfly}) does not by
itself reduce peak memory: if every active fiber $f_x$ is still written
into one $(n_{\text{active}},\dim)$ result buffer, that buffer alone
costs $O(n_{\text{active}}\cdot\dim)$, even when most entries are later
thresholded to zero.
Chunking the \emph{input} does not remove that cost if the
\emph{output} is still assembled in full before return.

The memory-bounded path never builds string labels or a
\texttt{dict[str,complex]}; the streaming interface yields raw arrays
of indices $(x,z)$ and coefficients. A dict-returning interface remains
available for callers who want string keys; it is not required by the
design above. Both choices
follow Lemma~\ref{lem:chunk-independence} and
Algorithm~\ref{alg:streaming-fwht}: each chunk is an independent
result, with no recombination step, and the union across chunks equals
the unchunked decomposition (Figure~\ref{fig:chunking}).

\begin{figure}[htbp]
\centering
\resizebox{\linewidth}{!}{

\begin{tikzpicture}[
    font=\small,
    node distance=8mm and 10mm,
    chunkbox/.style={
      draw, rounded corners, minimum width=32mm, minimum height=8mm,
      align=center
    },
    workbox/.style={
      draw, rounded corners, minimum width=32mm, minimum height=22mm,
      text width=31mm, align=left, fill=black!4, font=\scriptsize,
      inner sep=2mm
    },
    arr/.style={-{Stealth[length=2mm]}, thick}
  ]

  \node[draw, minimum width=132mm, minimum height=10mm, align=center]
    (xset) {active-row index set $X\subseteq(\mathbb{Z}_2)^n$, $|X|=n_{\text{active}}$};

  \node[chunkbox, below=of xset, xshift=-50mm] (x1) {$X_1$};
  \node[chunkbox, right=of x1] (x2) {$X_2$};
  \node[right=of x2, font=\large] (dots) {$\cdots$};
  \node[chunkbox, right=of dots] (x4) {$X_C$};

  \draw[arr] (xset.south) -| (x1.north);
  \draw[arr] (xset.south) -| (x2.north);
  \draw[arr] (xset.south) -| (x4.north);

  \node[align=right, left=4mm of x1, font=\scriptsize, text width=20mm]
    {chunk\_size\\$\approx|X_1|$};

  \newcommand{\workstages}[1]{%
    \textbf{1. gather} rows $T_{v,v+x}$, $x\in#1$\\
    \textbf{2. butterfly} (Eq.~\ref{eq:wht-butterfly})\\
    \textbf{3. phase} multiply by $i^{|x\wedge z|}$\\
    \textbf{4. threshold} drop $|d_{x,z}|<\text{tol}$%
  }
  \node[workbox, below=10mm of x1] (w1) {\workstages{X_1}};
  \node[workbox, below=10mm of x2] (w2) {\workstages{X_2}};
  \node[workbox, below=10mm of x4] (w4) {\workstages{X_C}};

  \draw[arr] (x1) -- (w1);
  \draw[arr] (x2) -- (w2);
  \draw[arr] (x4) -- (w4);

  \node[left=3mm of w1, align=right, font=\scriptsize, text width=16mm]
    {independent (Lemma~\ref{lem:chunk-independence})};

  \node[chunkbox, below=10mm of w1, fill=black!8]
    (d1) {$(x,z,\text{coeff})_{X_1}$};
  \node[chunkbox, below=10mm of w2, fill=black!8]
    (d2) {$(x,z,\text{coeff})_{X_2}$};
  \node[chunkbox, below=10mm of w4, fill=black!8]
    (d4) {$(x,z,\text{coeff})_{X_C}$};

  \draw[arr] (w1) -- (d1);
  \draw[arr] (w2) -- (d2);
  \draw[arr] (w4) -- (d4);

  \node[draw, minimum width=132mm, minimum height=10mm, align=center,
        below=10mm of d1, xshift=50mm, fill=black!12]
    (union) {$D_1\cup D_2\cup\cdots\cup D_C$ = full decomposition
             (Lemma~\ref{lem:chunk-independence})};

  \draw[arr] (d1.south) -- ++(0,-5mm) -| (union.north -| d1);
  \draw[arr] (d2.south) -- ++(0,-5mm) -| (union.north -| d2);
  \draw[arr] (d4.south) -- ++(0,-5mm) -| (union.north -| d4);

\end{tikzpicture}}
\caption{Chunking as partition-and-stream: the active-offset index set
$X$ is split into disjoint chunks $X_1,\ldots,X_C$
(Lemma~\ref{lem:chunk-independence}), each processed independently
into raw $(x,z,\text{coefficient})$ arrays, with no recombination step
--- the union across chunks is exactly the unchunked decomposition.
Schematic: three representative chunks are drawn, not the true chunk
count $C$ for any specific problem size.}
\label{fig:chunking}
\end{figure}

Measured peak memory under this design --- including the full
$91{,}652{,}096$-term $N=150$ case and the oscillator sweep through
$N=300$ ($1{,}470{,}038{,}016$ terms) --- is reported with protocol in
Section~\ref{sec:results}.

\subsection{Leading constants and kernel cost}
\label{sec:leading-constants-kernels}

A pure-Python profile of this workload puts the leading runtime cost
not in the Walsh--Hadamard butterfly counted in
Proposition~\ref{prop:qubit-decomposition}, but in per-survivor label
construction (mapping each numeric $(x,z)$ to an ``IXYZ''-style
string). That step lies outside the proposition's $O(n\,4^n)$
transform count. The remedy is compiled kernels for labeling and for
the per-fiber transform --- the same coefficients, with a smaller
leading constant. Throughput under those kernels, and which parallel
mechanism wins at which scale, is measured in
Section~\ref{sec:results}; implementation correctness is checked in
Section~\ref{sec:correctness-verification}.

\subsection{Parallel execution}

Chunk independence (Lemma~\ref{lem:chunk-independence}) is what makes
streaming memory-bounded: each chunk is yielded alone, with no shared
fiber accumulator. The same independence lets chunks run concurrently
on multiple cores. Under CPython the two usual mechanisms for
CPU-bound chunk work are child processes and threads; on the
array-yielding path \texttt{paulikit} provides both --- a process pool
and a threaded drain.

Threads are always available in CPython, but under GIL-bound chunk
work they do not deliver multi-core concurrency: the lock serializes
Python bytecode. For that workload, multi-core concurrency requires
child processes. A \emph{process pool} runs each chunk in its own
process and returns the result arrays to the parent; chunk
independence is still Lemma~\ref{lem:chunk-independence}. The pool
pays only while per-chunk compute exceeds the inter-process round
trip that ships inputs and results. On the memory side, each worker
duplicates a working set, so peak RSS can scale with the pool size
even though the logical stream is unchanged.

Once the compiled Walsh--Hadamard and coefficient kernels both
release the GIL and dominate runtime, a threaded drain can overlap
chunk transforms in a single address space, with no large result
arrays crossing a process boundary. Peak RSS then stays one process
footprint while chunks still run concurrently, with the same
yield-per-chunk interface as the sequential stream. Without those
kernels, threads cannot overlap useful work, so the dict-returning
interface (pure-Python labeling on the hot path) stays process-based.
On the parallel array-yielding path, \texttt{paulikit} defaults to a
thread drain when \emph{both} of those kernels are present in the
build, and to a process pool if either is missing; which mechanism is
faster at which scale is measured in Section~\ref{sec:results}.

\subsection{Correctness and verification}
\label{sec:correctness-verification}

To check that the streaming implementation matches
Proposition~\ref{prop:qubit-decomposition}, every coefficient that
survives the implementation's threshold is compared to the defining
Hilbert--Schmidt projection~\eqref{eq:direct-trace} on a code path
that does not use the Walsh--Hadamard transform --- not a sample of
survivors, and not the discarded near-zeros. The comparison runs on
configurations up to the full $91{,}652{,}096$-term $N=150$ case (the
same single-point check cited for chunking demos). Over those
survivors the maximum absolute deviation
$\max\lvert c_{\mathrm{impl}}-c_{\mathrm{ref}}\rvert$ is
$8.7\times10^{-19}$ in \texttt{complex128} arithmetic. Memory and
timing figures that quote this $N=150$ configuration are reported only
for outputs that pass the check. The $N=300$ sweep measures memory and
timing only; it is not a second exhaustive per-term audit.

\section{Results}
\label{sec:results}

The streaming implementation of Section~\ref{sec:method} is evaluated
against two derived ceilings and against direct measurement. The first
ceiling is the concurrency admitted by the dependency graph
(Work--Span); the second is the single-core limit set by the memory
hierarchy (arithmetic intensity versus sustained bandwidth). Both are
read from the structure of the shipped code; hardware parameters enter
only when a ceiling is evaluated numerically on the test platform and
compared with measurement. The subsections that follow report memory
scaling with result size, the process-versus-thread execution model in
practice, and parallel efficiency relative to a single-core baseline.
Where the platform's frequency-scaling governor would dominate
wall-clock comparisons, speedups are quoted in retired-cycle counts
rather than wall time (Sections~\ref{sec:quantifying-work-and-span}
and~\ref{sec:parallel-scaling}).

\subsection{Work--Span bound on parallelism}
\label{sec:theoretical-ceiling}

A Work--Span analysis bounds the concurrency admitted by the
decomposition's dependency structure, prior to any question of how
well a given scheduler exploits it. The bound is usable only when the
two kinds of work in the pipeline are costed as distinct units; that
distinction is fixed next, after which Work, Span, and the numerical
ceiling are obtained.

\subsubsection{Dependency structure}
\label{sec:dependency-structure}

Chunked decomposition (Section~\ref{sec:method}) has a fan-out /
fan-in shape. Chunks are mutually independent
(Lemma~\ref{lem:chunk-independence}): each worker runs gather,
butterfly, phase, and threshold for its own chunk, and every result
converges on a single consumer. How work is assigned differs by
mechanism (Section~\ref{sec:process-vs-thread}): a process pool refills
a bounded in-flight window as results arrive; the threaded drain
gives each worker a static, capacity-weighted share up front, with
work-stealing after that share is exhausted. In both cases completion
order is by arrival, not submission.

The fan-in is where \emph{array-native} work hands off to
\emph{interpreter-native} work. Call the former tier~V and the latter
tier~P (Figure~\ref{fig:dag}). Tier~V is NumPy or the compiled kernels
\texttt{wht\_rows\_inplace} and \texttt{coeffs\_from\_transformed}.
The WHT kernel touches every element of every row at every butterfly
stage, so its unit is one compiled-loop element. The coeffs kernel
matches that on its read side (phase, scale, threshold) but emits one
$(x,z,\text{coefficient})$ triple per survivor on the write side ---
a second, data-dependent unit sized by how many terms clear the
threshold, not by \texttt{chunk\_size} alone. Tier~P is CPython: one
unit is one bytecode-level object operation (hash, refcount, dict
probe). On the CPython side, unpickling is process-boundary-specific
(threads skip it by sharing the parent's address space); per-term
\texttt{str} keys and dict entries are not --- label-hungry callers
pay them on either executor.

\begin{figure}[htbp]
\centering
\resizebox{\linewidth}{!}{

\begin{tikzpicture}[
    font=\small,
    node distance=4mm and 10mm,
    vnode/.style={
      draw, rounded corners, minimum width=44mm, minimum height=10mm,
      text width=40mm, align=center, fill=black!4
    },
    pnode/.style={
      draw, rounded corners, minimum width=58mm,
      text width=54mm, align=left, fill=black!15, font=\scriptsize,
      inner sep=1.5mm
    },
    arr/.style={-{Stealth[length=2mm]}, thick}
  ]

  \node[vnode] (v1)
    {chunk $1$ worker\\ gather / butterfly / phase / threshold\\
     \scriptsize tier V (vectorized)};
  \node[vnode, below=of v1] (v2)
    {chunk $2$ worker\\ gather / butterfly / phase / threshold\\
     \scriptsize tier V (vectorized)};
  \node[below=4mm of v2] (dots) {$\vdots$};
  \node[vnode, below=4mm of dots] (v4)
    {chunk $C$ worker\\ gather / butterfly / phase / threshold\\
     \scriptsize tier V (vectorized)};

  \node[left=3mm of v2, align=right, font=\scriptsize, text width=16mm]
    {mutually independent (Lemma~\ref{lem:chunk-independence})};

  \node[pnode, right=of v2, yshift=-2mm, minimum height=16mm] (drain)
    {\textbf{consumer} (\texttt{parallel\_decompose\_arrays})\\[1mm]
     \textbf{1.} unpickle (process executor only)\\
     \textbf{2.} Hermitian check + checkpoint append (both optional)};

  \draw[arr] (v1.east) -- (drain.north west);
  \draw[arr] (v2.east) -- (drain.west);
  \draw[arr] (v4.east) -- (drain.south west);

  \node[pnode, below=22mm of drain, fill=black!25,
        minimum height=16mm] (labelcaller)
    {\textbf{dict-returning caller} (\texttt{parallel\_decompose},
     \emph{optional})\\[1mm]
     \textbf{3.} label (build one \texttt{str} key per term)\\
     \textbf{4.} dict insert (into \texttt{dict[str,complex]})};

  \draw[arr] (drain.south) -- (labelcaller.north);

  \node[right=8mm of drain, yshift=33mm] (keyorigin) {};
  \node[draw, rounded corners, minimum width=4mm, minimum height=4mm,
        anchor=north west] (swatchV) at (keyorigin.north west) {};
  \node[fill=black!4, minimum width=3mm, minimum height=3mm] at (swatchV) {};
  \node[right=2mm of swatchV, font=\scriptsize, align=left,
        text width=44mm, anchor=north west, yshift=1.5mm] (labelV)
    {\textbf{tier V}: element-level operation inside a compiled loop};
  \node[draw, rounded corners, minimum width=4mm, minimum height=4mm,
        anchor=north west, below=2mm of labelV.south west] (swatchP1) {};
  \node[fill=black!15, minimum width=3mm, minimum height=3mm] at (swatchP1) {};
  \node[right=2mm of swatchP1, font=\scriptsize, align=left,
        text width=44mm, anchor=north west, yshift=1.5mm] (labelP1)
    {\textbf{tier P, box 1}: bytecode-level object operation (hash,
     refcount, dict probe)};
  \node[draw, rounded corners, minimum width=4mm, minimum height=4mm,
        anchor=north west, below=2mm of labelP1.south west] (swatchP2) {};
  \node[fill=black!25, minimum width=3mm, minimum height=3mm] at (swatchP2) {};
  \node[right=2mm of swatchP2, font=\scriptsize, align=left,
        text width=44mm, anchor=north west, yshift=1.5mm] (labelP2)
    {\textbf{tier P, box 2}: same unit, the separate optional caller};

  \node[below=3mm of labelP2.south west, anchor=north west,
        font=\scriptsize, align=left, text width=52mm]
    (key) {%
     \textbf{1. unpickle} --- one chunk's $(x,z,\text{coeff})$ arrays
     across the process boundary; skipped entirely by the threaded
     executor, which shares the parent's address space.\\[1mm]
     \textbf{2. Hermitian check + checkpoint append} --- both
     optional; nothing tier-P-heavy, raw arrays are yielded as-is.\\[1mm]
     \textbf{3. label} --- one \texttt{str} key per surviving term.\\[1mm]
     \textbf{4. dict insert} --- write each \texttt{(label, coeff)}
     pair into the accumulating dict.%
  };

\end{tikzpicture}}
\caption{Dependency structure underlying the Work--Span analysis: per-chunk
worker nodes (tier V) are mutually independent
(Lemma~\ref{lem:chunk-independence}) and run concurrently, but every
chunk's result passes through a single, totally-ordered consumer
(tier P) in one process, regardless of which drain mechanism
(\S\ref{sec:process-vs-thread}) feeds it. That consumer is minimal for
\texttt{parallel\_decompose\_arrays} itself; the label- and
dict-building cost belongs to a separate, optional caller one layer
up that the array-yielding results reported here skip. This
asymmetry --- many independent V-tier nodes converging on one serial
P-tier consumer --- is what \S\ref{sec:theoretical-ceiling} bounds.}
\label{fig:dag}
\end{figure}

A step's tier is fixed by what it does, not by which API it sits
behind. Wrong tier assignment mis-weights Work and Span: costing a
compiled store as a CPython dict insertion (or the reverse) moves the
apparent critical path and does not cancel between the two sides of
the ratio. Labeling is the sharp example on the dict-returning path:
the wrapper can look compiled yet still build one \texttt{str} per
term, placing a tier~P cost inside tier~V-looking code. (The
array-yielding path measured below retires that cost entirely.)

\subsubsection{Corrected cost model}
\label{sec:corrected-model}

Serial tier~P work puts the critical path at the fan-in. The $C$
tier~V chunks may run concurrently, but every result still converges
on one consumer in one process, for either drain mechanism
(Section~\ref{sec:process-vs-thread}). That consumer is either the
minimal array path in \texttt{parallel\_decompose\_arrays} (optional
Hermitian check and checkpoint) or a heavier dict-returning wrapper
that builds per-term labels and a \texttt{dict[str,complex]} serially
(Section~\ref{sec:method}). Measurements here use the array path. In
both cases Span is one critical-path tier~V chunk plus the consumer's
full tier~P total --- a single totally-ordered node after convergence,
so idle workers cannot steal it and more workers cannot split it.
Reducing that cost means shrinking what the consumer does, as the
array path does by dropping per-term labels and dict construction,
not by adding workers.

\subsubsection{Work and Span}
\label{sec:quantifying-work-and-span}

On the compiled, array-yielding path measured here, Work is
the sum of four concrete costs. Active fiber functions are stored as
rows of a dense gather buffer and split into contiguous chunks of
length at most \texttt{chunk\_size}. The chunks cover
$[0,n_{\text{active}})$ without overlap or gap, so
$\sum_c\mathrm{rows}_c=n_{\text{active}}$. A kernel cost that scales
as $\mathrm{rows}_c\times\dim$ on each chunk therefore totals as one
pass over all $n_{\text{active}}$ fiber functions:
\begin{equation}
\text{Work} \approx t_{\text{wht}}\, n_{\text{active}}\,\dim\log_2\dim
  \;+\; t_{\text{coeffs}}\, n_{\text{active}}\,\dim
  \;+\; t_{\text{term}}\, n_{\text{terms}}
  \;+\; t_{\text{P}}\, C.
\label{eq:total-work-final}
\end{equation}
Here $t_{\text{wht}}$ is the compiled WHT cost per element per
butterfly stage, $t_{\text{coeffs}}$ the compiled coeffs cost per
element in the fused pass, and $t_{\text{P}}$ the per-chunk consumer
cost on the fastest path (dequeue, unpack, yield; no checkpoint or
Hermitian check). The trim rate $t_{\text{term}}$ applies only when
the wrapper actually copies survivors; it is not paid on every chunk. The WHT term is $\Theta(n_{\text{active}}\,\dim\log_2\dim)$. Its
operation count does not depend on the matrix entries: both the NumPy
fallback and the compiled kernel (\texttt{paulikit\_wht\_rows}) run
$\log_2\dim$ doubling stages over each chunk's $(\mathrm{rows},\dim)$
block with no value-dependent branches. The compiled kernel's
\texttt{tile} comes from \texttt{sysconf} --- machine-dependent like
chunk-size auto-tune, not value-dependent. The coeffs term on the compiled path is
$\Theta(n_{\text{active}}\,\dim)$: a single branchless pass
(\texttt{paulikit\_coeffs\_from\_transformed}, the default whenever
extensions are present). Survivor output is a separate
$\Theta(n_{\text{terms}})$ trim in the wrapper, paid only when that
copy triggers. A NumPy fallback sits in the same leading asymptotic
class but is not what the ceiling uses. Tier~P contributes
$\Theta(C)$ only.

At the test platform's $1.8$\,GHz base clock, the compiled rates are
$t_{\text{wht}}=0.748$\,ns and $t_{\text{coeffs}}=5.212$\,ns per
element, and $t_{\text{P}}=234.75$\,ns per chunk, from
\texttt{perf\_event\_open} instruction and cycle counts. At
$\dim=2^{14}$, the WHT term still leads coeffs
($t_{\text{wht}}\log_2\dim\approx10.5$\,ns ${}>\,t_{\text{coeffs}}$
per element); both dominate trim and tier~P on the pinned
configuration used here.

These rates are not taken from wall-clock time. On microsecond
bursts wall-clock was bimodal (spread up to $3.5\times$) under
\texttt{intel\_pstate} frequency stepping, while instruction counts
stayed stable to four figures (spread $<1.001\times$). Cycles still
vary (spread up to $1.23\times$), so seconds are recovered with each
kernel's measured instructions per cycle ($2.11$ for WHT, $3.60$ for
coeffs). The $t_{\text{P}}$ figure matches a direct wall-clock check
near $\sim$200\,ns across 1--100\,MB payloads (spread $<1.01\times$
on the cycle-derived value).

The trim copy runs only when a chunk's survivor count leaves the
output under half full. At $N=150$ and \texttt{chunk\_size}$=2$, that
happened for $2$ of $5{,}595$ chunks ($0.036\%$); one such chunk
($12{,}288$ survivors) measured $t_{\text{term}}=29{,}079$\,ns
(instructions per cycle $0.25$, consistent with a bandwidth-bound
$\sim$768\,KiB L3 copy). At this configuration the trim's
contribution to Work is negligible next to the kernel terms --- not
$t_{\text{term}}\times n_{\text{terms}}$ as if every survivor paid a
full trim.

The chunk count is
$C=\lceil n_{\text{active}}/\texttt{chunk\_size}\rceil$.
The default \texttt{chunk\_size} (\texttt{None}) invokes an auto-tune
driven by host cache probes and live \texttt{MemAvailable}; that can
change $C$ between runs, so it is not a stable basis for the ceilings
that follow. All Work/Span figures here pin the chunk size to $2$,
giving $C=5595$ at the pinned $n_{\text{active}}$ used here, confirmed
against the producing script.

Span is one critical-path tier~V chunk plus the full serial tier~P
total (Figure~\ref{fig:dag}). The critical-path chunk is charged at
length \texttt{chunk\_size} --- the largest chunk; the last may be
shorter --- while every chunk still passes through the totally-ordered
consumer, so $t_{\text{P}}\,C$ does not parallelize. The trim term is
kept only as a conservative upper bound (all $n_{\text{terms}}$
survivors charged on that path):
\begin{equation}
\text{Span} \approx t_{\text{wht}}\,\text{chunk\_size}\,\dim\log_2\dim
  \;+\; t_{\text{coeffs}}\,\text{chunk\_size}\,\dim
  \;+\; t_{\text{term}}\, n_{\text{terms}}
  \;+\; t_{\text{P}}\, C.
\label{eq:total-span-final}
\end{equation}

\subsubsection{Parallelism ceiling and its limits}
\label{sec:scope-and-limits}

At a finite worker count $p$, the usable ceiling is Brent's bound
\citep{brent_1974},
\begin{equation}
T_p \;\le\; \frac{\text{Work}}{p} + \text{Span},
\label{eq:brent-bound}
\end{equation}
not the unbounded-processor ratio
$\bar{P}=\text{Work}/\text{Span}$ from
Eqs.~\eqref{eq:total-work-final}--\eqref{eq:total-span-final}.
\texttt{paulikit} caps \texttt{n\_workers} at physical core count by
default, so $p$ is small next to $\bar{P}$.

Two caveats. First, classical work-stealing time bounds
\citep{blumofe_leiserson_1999} assume a fully strict spawn/join tree;
the threaded drain is a flat pool with static shares and stealing
between worker queues, so that theorem is not claimed for this
scheduler. Second, Work and Span need the operator's
$n_{\text{active}}$ and $n_{\text{terms}}$, not $\dim$ alone, and the
bound covers \texttt{parallel\_decompose\_arrays} only through its
own yield --- not caller-side work on each chunk.

On the pinned $N=150$ configuration,
$\text{Work}\approx2.877$\,s and $\text{Span}\approx1.856$\,ms, so
$\bar{P}\approx1550$ --- far above the machine's core count. Brent at
$p=2$ and $p=4$ then bounds speedup by $1.997\times$ and $3.990\times$,
nearly ideal, because Span is dominated by serial tier~P
($70.7\%$ of Span; Work splits $67\%$/$33\%$ WHT/coeffs).

Speedups below are retired-cycle counts (the frequency-scaling
governor can swamp wall-clock); runs were interleaved across thread
counts with a fixed thermal reset. An independent check uses the
tier~P serial fraction on one chunk ($f=0.0535$, $10$ repetitions,
$95\%$ CI $[0.0503,0.0577]$), measured separately from Work/Span.
Amdahl's forecast \citep{amdahl_1967} gives $1.90\times$ at $p=2$ and
$3.45\times$ at $p=4$ --- below Brent's bounds at the same $p$. A
threaded drain over all $5595$ chunks measures $1.81\times$ and
$3.36\times$ ($10$ repetitions, $95\%$ CI $[1.79,1.84]$ and
$[3.21,3.43]$), under both ceilings and closest to Amdahl.

\subsubsection{Memory-bandwidth ceiling on the kernel}

Independent of multi-core drain, how fast can the single-threaded WHT
kernel go on one chunk of the $N=150$ operator
($\dim=16384$)? Compare two platform ceilings: sustained memory
bandwidth versus peak arithmetic throughput. Their ratio --- peak
throughput divided by sustained bandwidth --- is the \emph{ridge
point} (FLOPs/byte). A kernel with arithmetic intensity below that
ridge is memory-bound; above it, compute-bound. The intensity itself is
portable --- fixed by the access pattern --- but the ridge is not, so
the same kernel can sit on different sides of the ridge on different
machines.

On this platform the ceilings are $22.5$~GB/s sustained bandwidth and
$28.8$~GFLOPs/s peak throughput, hence a ridge point of
$1.28$~FLOPs/byte. Bandwidth is from a STREAM-style copy --- one read
and one write per element, matching the butterfly --- over $64$~MiB
arrays (beyond the $8$~MiB L3, so traffic is to DRAM), pinned to one
core. Two implementations, \texttt{numpy.copyto} and a C
\texttt{memcpy}, agreed to within $0.84\%$ ($22.55$ and
$22.36$~GB/s). Peak throughput is counted from the ISA at the
$1.8$~GHz base clock ($4$ doubles per AVX2 vector, $2$ FMA units,
$2$ FLOPs per FMA), not turbo: under sustained AVX2/FMA load the
effective clock fell from $3.04$~GHz after warm-up to a $1.89$~GHz
median, near base rather than the $4.0$~GHz turbo ceiling.

A naive butterfly --- one DRAM pass per stage, no reuse across the
$\log_2\dim$ stages --- has intensity $0.0625$~FLOPs/byte ($1$
FLOP/element against $16$ bytes/element), $20\times$ below the ridge
point. The deployed kernel cache-blocks at an L1-derived
$\mathrm{tile}$ (build-time probe, not a free tuning knob). On
$32$~KiB L1, $\mathrm{tile}=1024$ ($\log_2\mathrm{tile}=10$), so
$10$ stages share one pass and $4$ stages still pay full traffic:
$5$ effective passes instead of $14$, intensity
$0.0625\times(14/5)=0.175$~FLOPs/byte --- still $7.3\times$ below the
ridge, but $2.8\times$ the naive baseline. Even perfect blocking (one
pass for all $14$ stages) reaches only $0.875$~FLOPs/byte,
$1.46\times$ below the ridge. Cache-blocking alone cannot make this
kernel compute-bound here; the deployed tile already takes a
$2.8\times$ step toward that hard $14\times$ ceiling.

The measured
$t_{\text{wht}}=0.748$\,ns per element
(Section~\ref{sec:quantifying-work-and-span}) checks the picture.
At one FLOP per element that is $1.34$~GFLOPs/s --- $4.6\%$ of the
compute peak, where a memory-bound kernel should sit. Dividing by the
tiled intensity recovers $7.64$~GB/s achieved, $34\%$ of the STREAM
ceiling. The shortfall is expected: STREAM is one sequential stream,
while the tiled butterfly re-enters memory on many short tiles across
the full-traffic stages. The check is not circular:
$t_{\text{wht}}$ is from timing the kernel; $22.5$~GB/s comes from a
separate microbenchmark.

\subsection{Memory scaling with result size}
\label{sec:memory-scaling}

Section~\ref{sec:method} asserts that, on the array-yielding path, peak
memory is a per-chunk working set of order
$O(\texttt{chunk\_size}\cdot\dim)$ together with the survivors
$O(n_{\text{terms}})$, rather than a dense accumulate buffer. The
claim concerns \texttt{paulikit}'s own allocations, and each run
therefore reports two meters. \texttt{decompose} tracks byte-level
\texttt{malloc}/\texttt{free} from first touch of the input --- the
quantity addressed by the bound --- while \texttt{peak\_rss} is the
whole-process resident set, including the caller's input matrix, and
is reported so that the end-to-end footprint remains visible rather
than as a proxy for the bound.

On the oscillator family with \texttt{chunk\_size}$=2$ and
$N=50,\ldots,300$, surviving terms grow $1165\times$
($1{,}261{,}568$ to $1{,}470{,}038{,}016$) while
\texttt{decompose} grows only $3.49\times$ ($13.2$ to $46.0$~MiB) and
\texttt{peak\_rss} only $1.54\times$ ($64.7$ to $99.6$~MiB) --- flat
against three orders of magnitude in result size
(Figure~\ref{fig:memory-scaling}, left). Each point passed the
survivors-only correctness gate (identical term counts across three
repetitions); no repetition exceeded a $1.15\times$ spread threshold.

\begin{figure}[!htbp]
\centering
\includegraphics[width=\linewidth]{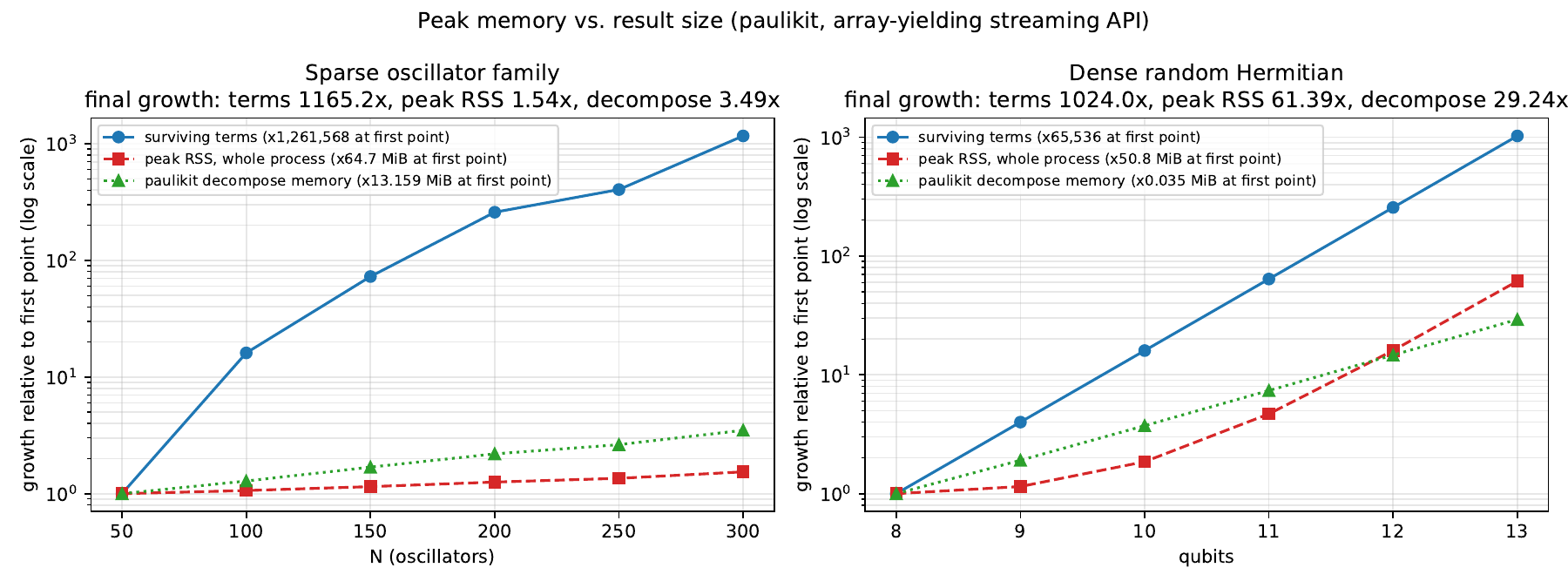}
\caption{Surviving term count, whole-process peak resident memory, and
paulikit's own decomposition memory, all normalized to their value at
the sweep's first point and plotted on a shared log scale, so the gap
between term-count growth and either memory curve's growth is directly
visible. \emph{Left}: sparse oscillator family, $N=50,\ldots,300$,
chunk\_size fixed at $2$ --- terms grow $1165\times$, peak RSS
$1.54\times$, paulikit's own decomposition memory $3.49\times$.
\emph{Right}: dense random Hermitian input with
\texttt{assume\_dense=True}, $n_{\text{qubits}}=8,\ldots,13$ --- terms
grow $1024\times$, peak RSS $61.5\times$ (entirely the cost of holding
the input matrix itself), while paulikit's own decomposition memory
grows only $29.2\times$ off a sub-megabyte base, peaking at
$1.034$~MiB.}
\label{fig:memory-scaling}
\end{figure}

The dense random-Hermitian sweep with \texttt{assume\_dense=True}
($n_{\text{active}}=\dim$, the bound's least favorable case) separates
the two meters sharply (Figure~\ref{fig:memory-scaling}, right). At
$n=8,\ldots,13$, terms grow $1024\times$ ($65{,}536$ to
$67{,}108{,}864$) while \texttt{decompose} grows only $29.2\times$,
peaking at $1.034$~MiB --- a fixed-size per-chunk working set
decoupled from the result. By contrast \texttt{peak\_rss} grows
$61.5\times$ ($50.7$ to $3118$~MiB): a dense $8192\times8192$
\texttt{complex128} matrix is already $1$~GiB, and NumPy's
construction transiently needs more before \texttt{paulikit} runs.
That $\dim^2$ RSS growth is the cost of holding a dense input in
core in this experiment. A non-streaming dense FWHT that still
materializes a full accumulate buffer cannot escape a large working
set by out-of-core input alone; the memory-bound scaling above is a
property of the streaming (chunked) decomposition, not of denseness.
It is not an accumulate-then-threshold working set of size
$O(n_{\text{active}}\times\dim)$ on top of the input --- that extra
cost is absent here.

Without \texttt{assume\_dense=True}, the same dense sweep takes the
general sparsity-detection path (\texttt{np.nonzero} plus
sort-and-scatter). Then \texttt{decompose} grows $833\times$
($8.1$~MiB at $n=8$ to $6722$~MiB at $n=13$), nearly in step with the
$1024\times$ growth in term count --- a real cost, and avoidable when
the caller already knows the input is dense.

Holding $N=300$ and sweeping \texttt{chunk\_size}$\in\{1,2,4\}$
isolates the other half of the bound: \texttt{decompose} rises
linearly ($42.0$, $46.0$, $54.2$~MiB) and \texttt{peak\_rss} with it
($95.5$, $99.4$, $108.0$~MiB), while the answer stays fixed at
$1{,}470{,}038{,}016$ terms. The
$\texttt{chunk\_size}\times\dim$ term is measured, not only asserted.

\FloatBarrier
\subsection{Parallel execution model}
\label{sec:process-vs-thread}

Section~\ref{sec:theoretical-ceiling} bounded the parallelism admitted
by the dependency graph. The present subsection records how work
reaches workers on the array-yielding path (Section~\ref{sec:method}).

Two substrates are available. A \emph{process pool}
(\texttt{executor="process"}) gives each worker its own address space
and, with affinity, a physical core, at the price of pickling every
chunk in and every result out. A \emph{threaded drain}
(\texttt{executor="auto"} once both compiled kernels are present)
keeps zero-copy dispatch; multi-core concurrency follows because those
kernels release the GIL in their inner loops.

On top of chunk independence, the threaded drain adds three scheduling
choices. Workers receive a static, capacity-weighted partition up
front --- contiguous slices sized inversely to the number of workers
sharing a physical core, rather than equal chunk counts --- so as to
avoid shared-queue contention and the imbalance of a naive equal split
when the worker count exceeds the core count. Every worker is created
and barrier-synchronized before the first chunk (eager spin-up),
avoiding staggered thread creation on early tasks. When a worker's
share is exhausted, it steals from the fullest victim (the peer with
the most remaining work), correcting runtime imbalance from frequency
scaling or contention that a fixed schedule cannot foresee.

The substrate trade is therefore affinity-plus-pickling versus
zero-copy-plus-GIL. On this platform, with the compiled kernels
present, pickling outweighs the affinity guarantee, and \texttt{auto}
defaults to threads; the crossover is set by this host's inter-process
shipping cost relative to per-chunk compute and may move on another
machine. Capacity weighting, spin-up, and stealing affect only how
evenly independent chunks reach idle workers, not correctness
(Lemma~\ref{lem:chunk-independence}). Whether that evenness holds in
practice is measured next.

\FloatBarrier
\subsection{Parallel efficiency versus single-core baseline}
\label{sec:parallel-scaling}

Measured wall-clock speedup on the threaded drain lies just under the
Amdahl forecast already quoted from the same deposit in
Section~\ref{sec:scope-and-limits}: $1.81\times$ at two workers and
$3.36\times$ at four (ten repeats; $95\%$ CIs $[1.79,1.84]$ and
$[3.21,3.43]$), against that run's ceilings $1.90\times$ and
$3.45\times$. The residual gap is expected, since the ceiling ignores
drain overhead by construction; how closely the run approaches it is a
test of the scheduling of Section~\ref{sec:process-vs-thread}, not a
second bound on the algorithm.

Multi-core benches in the same deposit make two further points on this
four-physical-core host. Pushing the threaded drain to eight workers
(the logical/hyperthread count) is slower than four --- wall time
rises --- so SMT oversubscription does not buy throughput here.
Separately, the process-pool executor is slower than single-core
decomposition at every tested worker count; the cost is pickling across
address spaces, not a failure of the threaded drain's
partition-and-steal schedule, which that executor does not use.

Figure~\ref{fig:idle-threads} records the occupancy check: the
kernel-verified active-thread count over a full run at the platform's
physical (four) and logical (eight) core counts. The question is
whether the partitioning-plus-stealing design of
Section~\ref{sec:process-vs-thread} sustains multi-core occupancy
rather than front-loading it. The choice $N=200$, rather than the
manuscript's usual $N=150$, follows from earlier drain-design runs at
$N=200$ that produced the worst tail collapse seen in that
investigation.

\begin{center}
\includegraphics[width=\linewidth]{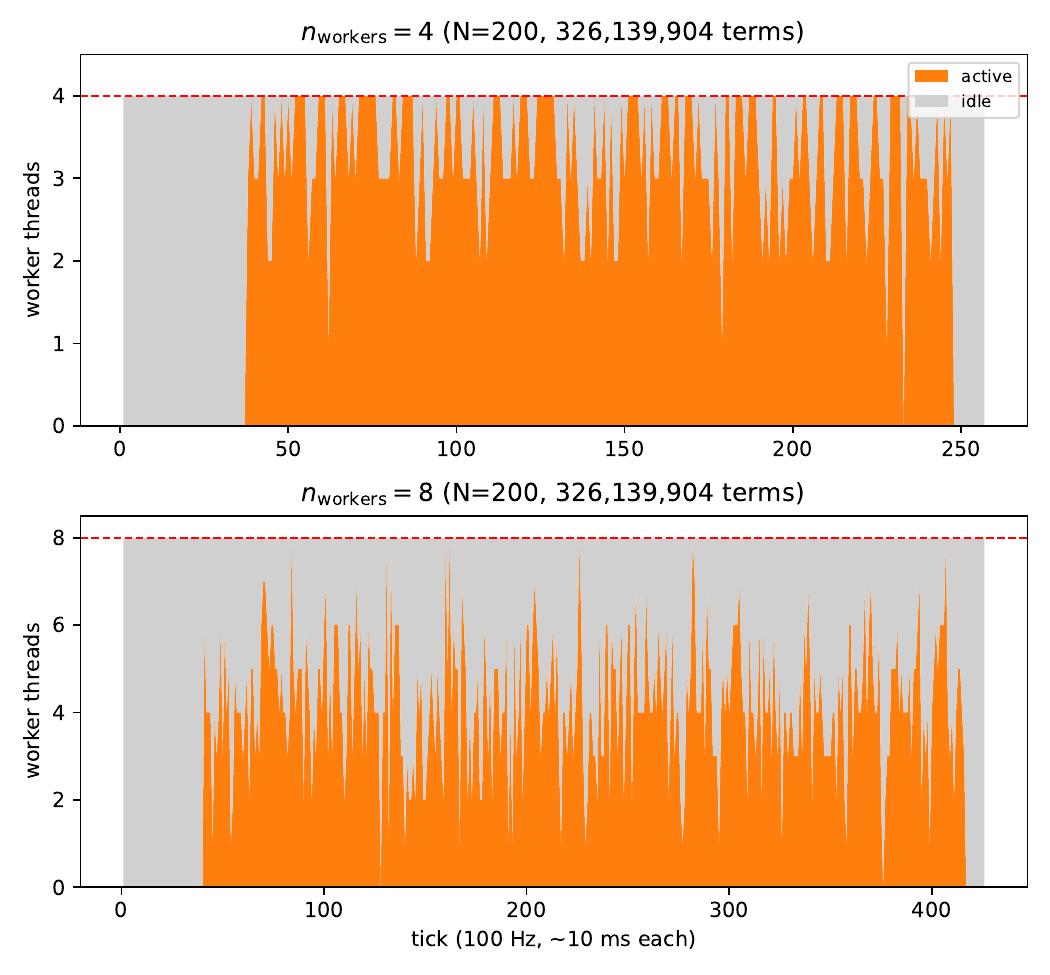}
\captionof{figure}{Active worker-thread count over a full decomposition,
sampled at $100$~Hz via uprobes on the compiled kernels' entry and exit
points (active $=$ inside a kernel, not scheduler state).
\emph{Top}: $n_{\text{workers}}=4$ (physical cores).
\emph{Bottom}: $n_{\text{workers}}=8$ (logical/hyperthreads). Dashed
line: $n_{\text{workers}}$; shaded band: idle gap. Python~$3.14$:
CPython~$3.13$+ GIL-handoff changes can affect concurrency on this
platform (Section~\ref{sec:availability}), so do not compare this
pattern quantitatively to a different CPython without that caveat.}
\label{fig:idle-threads}
\end{center}

\FloatBarrier
\section{Limitations and outlook}
\label{sec:limitations}

Asymptotically, the present work remains in the same $O(n\cdot 4^n)$
FWHT time class as Hamaguchi, Hamada and Yoshioka
\citep{hamaguchi_et_al_2024} and Georges et al.\
\citep{georges_et_al_2025}; it does not advertise a better big-O, nor
the polynomial query complexity of Spencer, Bharti and Gorshkov
\citep{spencer_bharti_gorshkov_2026} under a Pauli-sparsity promise
with sparse query access. That statement is not a concession of
practical inferiority on the matrix-input path of \texttt{paulikit}:
measured wall-clock on the stated platform is substantially below
\texttt{paulikit}'s own dense, materializing baseline in the same
$O(n\cdot 4^n)$ class, both single-core and multi-core. Those timings
are engineering evidence deposited with the instruments --- neither a
separate complexity theorem nor the claim led with here. The emphasis
is the streaming architecture that makes large \emph{matrix}-sparse
runs feasible (peak memory independent of a full $2^n\times 2^n$
buffer), together with the character-theoretic derivation (bare versus
genuine), open verification, and performance ceilings checked on one
stated platform.
The ceilings and parallel-efficiency numbers are platform-specific (a
four-physical-core Linux workstation under a stated \texttt{perf}/DVFS
protocol) and bound \texttt{paulikit} as measured, not every
Pauli-decomposition code. Dense runs reported here still assume an
in-core input matrix: the $\dim^2$ growth in \texttt{peak\_rss}
(Section~\ref{sec:memory-scaling}) is precisely that cost. Out-of-core
or chunked ingestion of dense input, so that the resident set need not
hold a full $2^n\times 2^n$ array, is not shipped; streaming already
removes the accumulate buffer, not the input. A dedicated qudit module
on Corollary~\ref{cor:qudit-decomposition} is likewise not shipped;
non-Hermitian operators are first-class inputs, while a worked
non-Hermitian physical companion example remains follow-on work.

\section{Conclusion}
\label{sec:conclusion}

Matrix-input Pauli decomposition by FWHT costs $O(n\cdot 4^n)$; the
contribution of this paper is not a better exponent. It consists in a
character-theoretic account of why that transform is the appropriate
one --- labels, sections of the central extension, and the qudit lift
--- together with a streaming realization whose peak memory, for
matrix-sparse inputs, need not densify to a full $2^n\times 2^n$
buffer. Prior fast codes in the same class keep an $O(1)$ additional
working set on a mandatory dense load; that architectural contrast ---
peak memory without densifying a matrix-sparse input --- is the
comparison owned here.

On the stated platform the claim holds through the coupled-oscillator
sweep at $N{=}300$, amounting to $1{,}470{,}038{,}016$ terms, under
Work--Span and memory-bandwidth ceilings checked without violation.
Dense experiments reported here still keep the input matrix in core;
lifting that $\dim^2$ resident cost by out-of-core ingestion, shipping
the qudit corollary, and writing a non-Hermitian companion case are the
natural next steps invited by the present construction.

\section*{Data and Code Availability}
\label{sec:availability}

Measurement scripts, raw results, and analyses behind
Section~\ref{sec:results} are on Zenodo as \emph{paulikit
measurements}~1.0.2, concept
DOI~\href{https://doi.org/10.5281/zenodo.22992460}{\nolinkurl{10.5281/zenodo.22992460}}.
That deposit's \texttt{README.md} and \texttt{METHODOLOGY.md} state the
protocol and which quantities reproduce exactly versus only
qualitatively; each result file records the \texttt{paulikit} commit
used. The software those runs characterize, \texttt{paulikit}~0.1.3,
is archived separately at concept
DOI~\href{https://doi.org/10.5281/zenodo.22992251}{\nolinkurl{10.5281/zenodo.22992251}}
(GPL-3.0-or-later; also PyPI, Codeberg, GitHub). This manuscript is
CC BY~4.0.

\section*{Author Contributions}
\label{sec:contributions}

All authors contributed to the implementation and discussions; M.K.\
and M.M.\ ran the measurements. LLMs assisted with language editing
and with code under review; all results and final wording were checked
and approved by the authors.

\bibliographystyle{unsrtnat}
\bibliography{refs/bibliography}

\end{document}